\documentclass[english]{article}
\usepackage{amsfonts,amsmath,amsthm,amscd,amssymb,latexsym}
\usepackage{tikz}

%Change theorem environments according to your needs...
\theoremstyle{plain}
\newtheorem{thm}{Theorem}[section]
\newtheorem{cor}[thm]{Corollary}

\newtheorem{prop}[thm]{Proposition}

\theoremstyle{definition}
\theoremstyle{remark}
\numberwithin{equation}{section}
\newcommand{\keywords}{\textbf{Key words and phrases: }\medskip}
\newcommand{\subjclass}{\textbf{Math. Subj. Clas.: }\medskip}
%\renewcommand{\abstract}{\textbf{ABSRACT}}

% MATH -----------------------------------------------------------

\newcommand{\bigzero}{\mbox{\normalfont\large 0}}
% ----------------------------------------------------------------
\begin{document}
\title{\textbf{2D  discrete Hodge-Dirac operator on the torus} }
\author{\textbf{Volodymyr Sushch} \\
{ \em Koszalin University of Technology} \\
 {\em Sniadeckich 2, 75-453 Koszalin, Poland} \\
 { \em volodymyr.sushch@tu.koszalin.pl} }
%\shorttitle{Short paper title for the headers}
%\shortauthor{F. Author, S. Author}

\date{}
\maketitle
\begin{abstract}
We discuss a discretisation of the de Rham-Hodge theory in the two-dimensional case based on a discrete exterior calculus framework.
We present discrete analogues  of the Hodge-Dirac and Laplace operators in which  key geometric aspects of the continuum counterpart are captured.
We provide and prove a discrete version of the Hodge decomposition theorem. Special attention has been paid to discrete models on  a combinatorial torus. In this particular case, we also define and calculate the cohomology groups.
\end{abstract}

\keywords{discrete exterior calculus, discrete operators, Hodge-Dirac operator, discrete Laplacian, Hodge decomposition, combinatorial torus, cohomology groups}

 \subjclass  {39A12, 39A70, 58A14}

 \section{Introduction}
The choice of technique to approximate the solution of partial differential equations depends on a discretisation scheme.
The interest for discrete models which preserve a geometric structure of continuum counterparts has grown in the computing community \cite{Arnold1, Arnold2, Ayoub, Nitschke, Stern}.
It is known that a  geometric discretisation scheme would be ideal if this had the same properties as the continuum. However, there are difficulties usually in definitions of discrete counterparts of the Hodge star and the wedge product between forms. Various approaches with geometric discretisation have been proposed in the literature. See, for example, \cite{Beauce, Becher, Dodziuk, Rabin, SSSA,Teixeira, Watterson, Wilson}.
 Recently, in \cite{CKSV}, it was described a quite general framework of  discrete calculus  based on a new  type of discrete
geometry called script geometry. The proposed approach in this article was introduced by Dezin \cite{Dezin} and later developed in the author's previous papers \cite{S1,S2,S3,S4,S5,S6}.

 Let $\Lambda(\mathbb{R}^2)=\Lambda^0(\mathbb{R}^2)\oplus\Lambda^1(\mathbb{R}^2)\oplus\Lambda^2(\mathbb{R}^2)$ denote the graded vector space of smooth differential forms on $\mathbb{R}^2$, where $\Lambda^r(\mathbb{R}^2)$ denotes the subspace of $r$-forms, $r=0,1,2$. Let
$d: \Lambda^r(\mathbb{R}^2) \rightarrow \Lambda^{r+1}(\mathbb{R}^2)$ be the exterior derivative. The codifferential
$\delta: \Lambda^r(\mathbb{R}^2) \rightarrow \Lambda^{r-1}(\mathbb{R}^2)$ is defined by $\delta=-\ast d \ast$, where $\ast$ is the Hodge star operator such that $\ast: \Lambda^r(\mathbb{R}^2) \rightarrow \Lambda^{2-r}(\mathbb{R}^2)$ and $\ast^2=(-1)^r$.
The operator
\begin{equation}\label{1.1}
d+\delta: \Lambda(\mathbb{R}^2) \rightarrow \Lambda(\mathbb{R}^2)
\end{equation}
is called the Hodge-Dirac operator on $\mathbb{R}^2$.
The Laplacian $\Delta: \Lambda^r(\mathbb{R}^2) \rightarrow \Lambda^r(\mathbb{R}^2)$ is defined by
\begin{equation}\label{1.2}
\Delta=(d+\delta)^2=d\delta+\delta d.
\end{equation}
Our purpose is to develop a satisfactory discrete model of de Rham-Hodge theory on manifolds which are homeomorphic to the torus.
We consider a chain complex as a combinatorial model of $\mathbb{R}^2$. When we add to this discrete analogues of the exterior derivative, the Hodge star operator, and the exterior product acting on cochains, we have all of the basic ingredients for the calculus of discrete counterparts of differential forms. We show that discrete analogues of the operators \eqref{1.1} and \eqref{1.2} have properties like those in the continual case. We formulate and prove a discrete version of the Hodge decomposition theorem.
We give an example illustrating how cohomology groups are calculated for our discrete model.
Note that our construction of discrete versions of the Hodge-Dirac and Laplace operators is very close to the construction in Section 5 of \cite{CKSV}. Matrix forms of these discrete operators on the torus are the same in both of case. The difference between our approach and one in \cite{CKSV} is in the definitions of discrete Hodge-Dirac and Laplace operators. In \cite{CKSV}, the definitions are given in terms of the exterior derivative and boundary operators. Meanwhile, as in the continual theory, we define these operators in terms of the exterior derivative and its adjoint.

 \section{Discrete model}
 In this section, we briefly review the construction of a discrete exterior calculus framework, which was initiated in \cite{Dezin} and developed in e.g., \cite{S2,S3}.

The starting point for consideration is a two-dimensional chain complex (a combinatorial model of $\mathbb{R}^2$).
 Let  the sets  $\{x_k\}$ and $\{e_k\}$, $k\in {\mathbb Z}$, be the generators of free abelian groups of
 zero-dimensional and one-dimensional
chains of the one-dimensional complex $C=C_0\oplus C_1$.
The free abelian group is understood as the direct sum of infinity cyclic groups generated by $\{x_k\}$, $\{e_{k}\}$.
The boundary operator $\partial: C_1\rightarrow C_0$ is the homomorphism defined by
%\begin{equation}\label{2.1}
$\partial e_k=x_{k+1}-x_k$
%\end{equation}
and the boundary of every zero chain is defined to be zero.
 Geometrically we can interpret the zero-dimensional basis elements $x_k$ as points of the real line and the one-dimensional basis elements $e_k$ as open intervals between points. We call the complex  $C$ a combinatorial real line.
Let the tensor product
\begin{equation*}
C(2)=C\otimes C
\end{equation*}
 be a combinatorial model of the two-dimensional Euclidean space
 $\mathbb{R}^2$.
The two-dimensional complex $C(2)=C_0(2)\oplus C_1(2)\oplus C_2(2)$ contains the free abelian groups of $r$-chains, $r=0,1,2$, generated by the basic elements
 \begin{align*}
x_{k,s}=x_k\otimes x_s,  \qquad
e_{k,s}^1=e_k\otimes x_s, \qquad e_{k,s}^2=x_k\otimes e_s, \qquad  V_{k,s}=e_k\otimes e_s,
\end{align*}
where $k,  s \in {\mathbb Z}$.
It is convenient to
introduce the shift operators  $\tau,\sigma$ in the set of indices by
\begin{equation}\label{2.1}
\tau k=k+1 \qquad \sigma k=k-1.
\end{equation}
 The boundary operator $\partial: C_r(2)\rightarrow C_{r-1}(2)$
 is given by
\begin{align}\label{2.2}
\partial x_{k,s}=0, \qquad  \partial e_{k,s}^1=x_{\tau k,s}-x_{k,s} \qquad  \partial e_{k,s}^2=x_{k, \tau s}-x_{k,s},\nonumber \\
\partial V_{k,s}=e_{k,s}^1+e_{\tau k,s}^2-e_{k, \tau s}^1-e_{k,s}^2.
\end{align}
The definition \eqref{2.2} is extended to arbitrary chains by linearity.

Let $K(2)$ be a complex of cochains with real coefficients. The cochain complex $K(2)$ with a coboundary operator defined in it is the dual object to the chain complex $C(2)$. It has a similar structure to $C(2)$ and consists of cochains of dimension  0, 1 and 2.
Then $K(2)$ can be expressed by
\begin{equation*}
K(2)=K^0(2)\oplus K^1(2)\oplus K^2(2),
\end{equation*}
where $K^r(2)$ is the set of all $r$-cochains. We will call cochains forms (or discrete forms) emphasizing their relationship with differential forms. Then the complex $K(2)$ is a discrete analogue of the grade algebra of differential forms  $\Lambda(\mathbb{R}^2)$.
Denote by $\{x^{k,s}\}$,  \
$\{e^{k,s}_1, \ e^{k,s}_2\}$ and $\{V^{k,s}\}$ the basis elements of $K^0(2)$, $K^1(2)$ and $K^2(2)$ respectively.
The pairing is defined with the basis elements of $C(2)$  by the rule
\begin{equation}\label{2.3}
\langle x_{k,s}, \ x^{i,j}\rangle=\langle e_{k,s}^1, \ e^{i,j}_1\rangle=
\langle e_{k,s}^2, \ e^{i,j}_2\rangle=\langle V_{k,s}, \ V^{i,j} \rangle=\delta_k^i\delta_s^j,
\end{equation}
where $\delta_k^i$ is the Kronecker delta.
The operation \eqref{2.3} is linearly extended to arbitrary chains and cochains.
     Let $\overset{r}{\omega}\in K^r(2)$, then we have
  \begin{equation}\label{2.4}
  \overset{0}{\omega}=\sum_{k,s}\overset{0}{\omega}_{k,s}x^{k,s}, \quad
  \overset{1}{\omega}=\sum_{k,s}(\omega^1_{k,s}e_1^{k,s}+\omega^2_{k,s}e_2^{k,s}), \quad \overset{2}{\omega}=\sum_{k,s}\overset{2}{\omega}_{k,s}V^{k,s},
\end{equation}
where  $\overset{0}{\omega}_{k,s}$, $\omega^1_{k,s}$,  $\omega^2_{k,s}$ and  $\overset{2}{\omega}_{k,s}$ are real numbers
for any $k, s \in {\mathbb Z}$.

The coboundary operator $d^c: K^r(2)\rightarrow K^{r+1}(2)$ is defined by
\begin{equation}\label{2.5}
\langle\partial a_{r+1}, \ \overset{r}{\omega}\rangle=\langle a_{r+1}, \ d^c\overset{r}{\omega}\rangle,
\end{equation}
where $a_{r+1}\in C_{r+1}(2)$.  The operator $d^c$ is an analog of the exterior differential.
From the above it follows that
\begin{equation*}\label{}
 d^c\overset{2}{\omega}=0 \quad \mbox{and} \quad d^cd^c\overset{r}{\omega}=0 \quad \mbox{for any} \quad r=0,1.
\end{equation*}
By \eqref{2.2} and \eqref{2.3} we can calculate
\begin{equation}\label{2.6}
d^c\overset{0}{\omega}=\sum_{k,s}(\Delta_k\overset{0}{\omega}_{k,s})e_1^{k,s}+(\Delta_s\overset{0}{\omega}_{k,s})e_2^{k,s},
\end{equation}
\begin{equation}\label{2.7}
d^c\overset{1}{\omega}=\sum_{k,s}(\Delta_k\omega_{k,s}^2-\Delta_s\omega_{k,s}^1)V^{k,s},
\end{equation}
where  $\Delta_k$ and $\Delta_s$ are the difference operators defined by
\begin{equation}\label{2.8}
\Delta_k\overset{r}{\omega}_{k,s}=\overset{r}{\omega}_{\tau k,s}-\overset{r}{\omega}_{k,s}, \qquad \Delta_s\overset{r}{\omega}_{k,s}=\overset{r}{\omega}_{k,\tau s}-\overset{r}{\omega}_{k,s}.
\end{equation}
Here  $\overset{r}{\omega}_{k,s}$ is a component of $\overset{r}{\omega}\in K^r(2)$  and $\tau$ is given by \eqref{2.1}.
 Note that $\overset{1}{\omega}_{k,s}=\{\omega^1_{k,s}, \ \omega^2_{k,s}\}$.

We now consider   a multiplication of discrete forms which is an analogue of the
exterior multiplication for differential forms.  Denote by  $\cup$ this multiplication.
For the basis elements of $K(2)$ the $\cup$-multiplication is defined as follows
\begin{equation*}\label{}
x^{k,s}\cup x^{k,s}=x^{k,s}, \qquad x^{k,s}\cup e^{k,s}_1=e^{k,s}_1, \qquad x^{k,s}\cup e^{k,s}_2=e^{k,s}_2,
\end{equation*}
\begin{equation*}\label{}
x^{k,s}\cup V^{k,s}=V^{k,s}, \qquad e^{k,s}_1\cup x^{\tau k,s}=e^{k,s}_1, \qquad e^{k,s}_2\cup x^{k, \tau s}=e^{k,s}_2,
\end{equation*}
\begin{equation*}\label{}
V^{k,s}\cup x^{\tau k,\tau s}=V^{k,s}, \qquad e^{k,s}_1\cup e^{\tau k,s}_2=V^{k,s}, \qquad e^{k,s}_2\cup e^{k, \tau s}_1=-V^{k,s},
\end{equation*}
supposing the product to be zero in all other cases. The operation is extended to arbitrary forms by linearity.
It is important to note that this definition leads to the following discrete counterpart of the Leibniz rule for differential forms.

\begin{prop}
Let $\overset{r}{\omega}\in K^r(2)$ and $\overset{p}{\varphi}\in K^p(2)$.
Then
\begin{equation}\label{2.9}
 d^c(\overset{r}{\omega}\cup\overset{p}{\varphi})=d^c\overset{r}{\omega}\cup\overset{p}{\varphi}+(-1)^r\overset{r}{\omega}\cup
d^c\overset{p}{\varphi}.
\end{equation}
\end{prop}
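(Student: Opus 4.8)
The plan is to reduce \eqref{2.9} to a finite check on homogeneous forms and then compare coefficients. Since $d^c$ is linear and $\cup$ is bilinear (both being extended from the basis by linearity), it suffices to verify the identity when $\overset{r}{\omega}$ and $\overset{p}{\varphi}$ are written in the component form \eqref{2.4}. The first observation that drastically cuts down the work is a bidegree count: the coboundary $d^c$ raises degree by one, $K^r(2)=0$ for $r>2$, and the $\cup$-product of forms of total degree exceeding $2$ vanishes by definition. Hence every case with $r+p\geq 2$ makes both sides of \eqref{2.9} identically zero --- for instance, if $r=p=1$ then $\overset{r}{\omega}\cup\overset{p}{\varphi}\in K^2(2)$ so $d^c(\overset{r}{\omega}\cup\overset{p}{\varphi})=0$, while $d^c\overset{1}{\omega}\in K^2(2)$ forces $d^c\overset{1}{\omega}\cup\overset{1}{\varphi}=0$ and similarly $\overset{1}{\omega}\cup d^c\overset{1}{\varphi}=0$. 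Thus only the three cases $(r,p)\in\{(0,0),(0,1),(1,0)\}$ require genuine computation.

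In each remaining case I would expand the left-hand side by first forming the $\cup$-product with the multiplication table preceding the proposition and then applying \eqref{2.6} or \eqref{2.7}, and expand the right-hand side by applying $d^c$ first and then multiplying. The algebraic engine throughout is the discrete product rule for the difference operators in \eqref{2.8}, namely
\begin{equation*}
\Delta_k(f_{k,s}\,g_{k,s})=(\Delta_k f_{k,s})\,g_{k,s}+f_{\tau k,s}\,\Delta_k g_{k,s},
\end{equation*}
together with its $\Delta_s$-analogue. For the base case $(0,0)$ one has $\overset{0}{\omega}\cup\overset{0}{\varphi}=\sum_{k,s}\overset{0}{\omega}_{k,s}\overset{0}{\varphi}_{k,s}\,x^{k,s}$, and comparing the $e_1^{k,s}$- and $e_2^{k,s}$-coefficients of the two sides reduces exactly to this product rule: the shifted factor $f_{\tau k,s}$ appearing there is supplied precisely by the rule $e_1^{k,s}\cup x^{\tau k,s}=e_1^{k,s}$ in the table (acting on the term $d^c\overset{0}{\omega}\cup\overset{0}{\varphi}$), whereas the unshifted term $\overset{0}{\omega}\cup d^c\overset{0}{\varphi}$ uses $x^{k,s}\cup e_1^{k,s}=e_1^{k,s}$.

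The mixed case $(1,0)$ (and symmetrically $(0,1)$) is where the bookkeeping is most delicate and is the step I expect to be the main obstacle. Here $\overset{1}{\omega}\cup\overset{0}{\varphi}=\sum_{k,s}(\omega^1_{k,s}\overset{0}{\varphi}_{\tau k,s}\,e_1^{k,s}+\omega^2_{k,s}\overset{0}{\varphi}_{k,\tau s}\,e_2^{k,s})$ carries the shifts $\overset{0}{\varphi}_{\tau k,s}$ and $\overset{0}{\varphi}_{k,\tau s}$ coming from the table, and after applying \eqref{2.7} the coefficient of $V^{k,s}$ on the left is $\Delta_k(\omega^2_{k,s}\overset{0}{\varphi}_{k,\tau s})-\Delta_s(\omega^1_{k,s}\overset{0}{\varphi}_{\tau k,s})$. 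On the right-hand side the sign $(-1)^1=-1$ appears, and the products $V^{k,s}\cup x^{\tau k,\tau s}=V^{k,s}$, $e_1^{k,s}\cup e_2^{\tau k,s}=V^{k,s}$ and $e_2^{k,s}\cup e_1^{k,\tau s}=-V^{k,s}$ introduce their own shifts and signs. Expanding both sides and invoking the product rule above, one checks that all doubly-shifted terms $\overset{0}{\varphi}_{\tau k,\tau s}$ cancel and the surviving terms agree. The crux is that the index shifts hard-wired into the definition of $\cup$ are exactly those needed to absorb the extra shifted factor produced by the discrete Leibniz rule, and that the sign $(-1)^r$ matches the sign $-1$ attached to $e_2^{k,s}\cup e_1^{k,\tau s}$; verifying this cancellation is routine but must be carried out carefully over the four $\tau$-shifted versions of $\overset{0}{\varphi}$.
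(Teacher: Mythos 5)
Your proposal is correct, but it cannot be compared line-by-line with the paper's own argument for a simple reason: the paper gives no proof at all. Proposition 2.1 is stated and immediately attributed to Dezin \cite{Dezin}, so the entire verification is deferred to the literature, and your direct computation is a self-contained substitute for that citation. The argument you outline is sound: linearity of $d^c$ and bilinearity of $\cup$ reduce \eqref{2.9} to homogeneous components; the degree count correctly disposes of every case with $r+p\geq 2$ (the left side is $d^c$ of a $2$-form, hence zero, while the right side consists of products of total degree $3$, all of which vanish by the ``zero in all other cases'' clause of the multiplication table); and the surviving cases $(0,0)$, $(0,1)$, $(1,0)$ reduce to the discrete product rule, with the index shifts hard-wired into the $\cup$-table absorbing the shifted factors and the sign in $e_2^{k,s}\cup e_1^{k,\tau s}=-V^{k,s}$ accounting for $(-1)^r$. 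I checked the $(1,0)$ case you flagged as the main obstacle: the left-hand coefficient $\Delta_k(\omega^2_{k,s}\varphi_{k,\tau s})-\Delta_s(\omega^1_{k,s}\varphi_{\tau k,s})$ and the right-hand coefficient $(\Delta_k\omega^2_{k,s}-\Delta_s\omega^1_{k,s})\varphi_{\tau k,\tau s}-\omega^1_{k,s}\Delta_s\varphi_{\tau k,s}+\omega^2_{k,s}\Delta_k\varphi_{k,\tau s}$ both expand to $\omega^2_{\tau k,s}\varphi_{\tau k,\tau s}-\omega^2_{k,s}\varphi_{k,\tau s}-\omega^1_{k,\tau s}\varphi_{\tau k,\tau s}+\omega^1_{k,s}\varphi_{\tau k,s}$, so the identity holds. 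Two cosmetic points: your claim that ``all doubly-shifted terms $\varphi_{\tau k,\tau s}$ cancel'' is slightly loose, since two such terms survive on each side and match each other (only the cross terms $\pm\omega^1_{k,s}\varphi_{\tau k,\tau s}$ and $\pm\omega^2_{k,s}\varphi_{\tau k,\tau s}$ cancel); and your stated product rule puts the shift on the first factor while the computation actually produces it on the second, which is immaterial because the two versions are equivalent for real-valued coefficients. What your route buys is a complete verification inside the paper's own two-dimensional formalism; what the paper's citation buys is brevity, deferring to Dezin's general treatment of the discrete Leibniz rule.
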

This  was proved by Dezin \cite{Dezin}.

Define the operation $\ast: K^r(2)\rightarrow  K^{2-r}(2)$  by the rule
\begin{equation}\label{2.10}
\ast x^{k,s}=V^{k,s}, \quad \ast e^{k,s}_1=e^{\tau k,s}_2, \quad \ast e^{k,s}_2=-e^{k,\tau s}_1, \quad \ast V^{k,s}=x^{\tau k, \tau s}.
\end{equation}
Again, the operation is extended to arbitrary forms by linearity. This operation is a discrete analogue of the Hodge star operator.
It is true that for any $\overset{r}{\omega}\in K^r(2)$  we have
  \begin{equation}\label{2.11}
  \overset{r}{\omega}\cup\ast\overset{r}{\omega}=\sum_{k,s}(\overset{r}{\omega}_{k,s})^2V^{k,s}.
\end{equation}
Consider the two-dimensional finite chain $V\in C_2(2)$ with unit coefficients of the form
\begin{equation}\label{2.12}
  V=\sum_{k,s}V_{k,s},  \qquad k=1,2, ..., N, \quad s=1,2,.., M.
\end{equation}
This chain imitates a rectangle.
Using \eqref{2.2} we have
\begin{equation}\label{2.13}
  \partial V=\sum_{k=1}^Ne^1_{k,1}+\sum_{s=1}^Me^2_{N,s}-\sum_{k=1}^Ne^1_{k,M}-\sum_{s=1}^Me^2_{1,s}.
\end{equation}
Then for forms
$\overset{r}{\varphi}, \ \overset{r}{\omega}\in K^r(2)$ of the same degree $r$ the inner
 product over the set  $V$ is defined  by the rule
 \begin{equation}\label{2.14}
 (\overset{r}{\varphi}, \ \overset{r}{\omega})_V=\langle V, \ \overset{r}{\varphi}\cup\ast\overset{r}{\omega}\rangle.
 \end{equation}
   For  forms of different degrees the product \eqref{2.14} is set equal to zero.
   From \eqref{2.11} and \eqref{2.3} we have
   \begin{equation*}\label{}
 (\overset{r}{\varphi}, \ \overset{r}{\omega})_V=\sum_{k=1}^N\sum_{s=1}^M\overset{r}{\varphi}_{k,s}\overset{r}{\omega}_{k,s}.
 \end{equation*}

 \begin{prop}
 Let $\overset{r}\varphi\in K^r(2)$  and $\overset{r+1}\omega\in K^{r+1}(2)$,  $r=0,1$. Then we have
\begin{equation}\label{2.15}
 (d^c\overset{r}\varphi, \ \overset{r+1}\omega)_V=\langle \partial V, \ \overset{r}{\varphi}\cup\ast\overset{r+1}{\omega}\rangle+(\overset{r}\varphi, \ \delta^c\overset{r+1}\omega)_V,
\end{equation}
 where
 \begin{equation}\label{2.16}
 \delta^c\overset{r+1}\omega=(-1)^{r+1}\ast^{-1}d^c\ast\overset{r+1}\omega
 \end{equation}
  is the
operator formally adjoint of $d^c$.
\end{prop}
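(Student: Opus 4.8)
The plan is to read \eqref{2.15} as a discrete integration-by-parts identity and to derive it from the discrete Leibniz rule \eqref{2.9} together with the defining adjointness \eqref{2.5} between $d^c$ and $\partial$. First I would rewrite the left-hand side using the definition of the inner product \eqref{2.14},
\[
(d^c\overset{r}{\varphi}, \ \overset{r+1}{\omega})_V=\langle V, \ (d^c\overset{r}{\varphi})\cup\ast\overset{r+1}{\omega}\rangle .
\]
Since $\ast\overset{r+1}{\omega}\in K^{1-r}(2)$, the product $\overset{r}{\varphi}\cup\ast\overset{r+1}{\omega}$ is a $1$-form, so applying $d^c$ yields a $2$-form that pairs nontrivially with the $2$-chain $V$; I would verify this degree count at the outset so that every $\cup$-product lands in the degree where the pairing with $V$ survives.

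Next I would apply the Leibniz rule \eqref{2.9} to $\overset{r}{\varphi}\cup\ast\overset{r+1}{\omega}$ (first factor of degree $r$),
\[
d^c(\overset{r}{\varphi}\cup\ast\overset{r+1}{\omega})=(d^c\overset{r}{\varphi})\cup\ast\overset{r+1}{\omega}+(-1)^r\overset{r}{\varphi}\cup d^c\ast\overset{r+1}{\omega},
\]
and solve for the term of interest,
\[
(d^c\overset{r}{\varphi})\cup\ast\overset{r+1}{\omega}=d^c(\overset{r}{\varphi}\cup\ast\overset{r+1}{\omega})-(-1)^r\overset{r}{\varphi}\cup d^c\ast\overset{r+1}{\omega}.
\]
Pairing both sides with $V$ and invoking \eqref{2.5} with $a_{r+1}=V$ on the first term on the right gives
\[
\langle V, \ d^c(\overset{r}{\varphi}\cup\ast\overset{r+1}{\omega})\rangle=\langle\partial V, \ \overset{r}{\varphi}\cup\ast\overset{r+1}{\omega}\rangle,
\]
which is exactly the boundary contribution appearing in \eqref{2.15}.

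It then remains to identify the second term with $(\overset{r}{\varphi}, \ \delta^c\overset{r+1}{\omega})_V$. Here I would use the definition \eqref{2.16} together with the invertibility of $\ast$, which is clear from \eqref{2.10} since $\ast$ permutes the basis elements up to sign and index shift. Applying $\ast$ to \eqref{2.16} gives $\ast\delta^c\overset{r+1}{\omega}=(-1)^{r+1}d^c\ast\overset{r+1}{\omega}$, so that
\[
(\overset{r}{\varphi}, \ \delta^c\overset{r+1}{\omega})_V=\langle V, \ \overset{r}{\varphi}\cup\ast\delta^c\overset{r+1}{\omega}\rangle=(-1)^{r+1}\langle V, \ \overset{r}{\varphi}\cup d^c\ast\overset{r+1}{\omega}\rangle .
\]
Since $-(-1)^r=(-1)^{r+1}$, this coincides with $-(-1)^r\langle V, \ \overset{r}{\varphi}\cup d^c\ast\overset{r+1}{\omega}\rangle$, and collecting the two contributions produces \eqref{2.15} with $\delta^c$ as defined.

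I do not expect a genuine obstacle: the identity reduces to a bookkeeping exercise once the Leibniz rule and the $\partial$–$d^c$ adjointness are in hand. The only places demanding care are the sign produced by \eqref{2.9} (governed by the degree $r$ of the left factor) and the correct placement of $\ast^{-1}$ in \eqref{2.16}; I would check explicitly that $\ast\ast^{-1}=\mathrm{id}$ on $K^{2-r}(2)$ and re-examine the degree of each intermediate form ($r$, $r+1$, $1-r$, $2-r$) so that the two sign factors cancel precisely as claimed.
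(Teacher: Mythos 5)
Your proposal is correct and follows essentially the same route as the paper's own proof: rewrite the inner product via \eqref{2.14}, apply the Leibniz rule \eqref{2.9} to $\overset{r}{\varphi}\cup\ast\overset{r+1}{\omega}$, convert the total-derivative term to the boundary term via the duality \eqref{2.5}, and identify the remaining term with $(\overset{r}{\varphi},\ \delta^c\overset{r+1}{\omega})_V$ using \eqref{2.16}. Your only cosmetic deviation is applying $\ast$ to \eqref{2.16} to get $\ast\delta^c\overset{r+1}{\omega}=(-1)^{r+1}d^c\ast\overset{r+1}{\omega}$, whereas the paper inserts $\ast\ast^{-1}$ directly; these are the same manipulation read in opposite directions.
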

Here $\ast^{-1}$ is the inverse of $\ast$, i.e., $\ast\ast^{-1}=1$. By \eqref{2.10} for the basic elements we have
\begin{equation*}\label{}
\ast^{-1} x^{k,s}=V^{\sigma k,\sigma s}, \quad \ast^{-1} e^{k,s}_1=-e^{k,\sigma s}_2, \quad \ast^{-1} e^{k,s}_2=e^{\sigma k, s}_1, \quad \ast^{-1} V^{k,s}=x^{k, s},
\end{equation*}
where $\sigma$ is given by \eqref{2.1}.
\begin{proof}
By Definitions \eqref{2.5}, \eqref{2.14} and Formula \eqref{2.9} we obtain
\begin{align*}\label{}
 (d^c\overset{r}\varphi, \ \overset{r+1}\omega)_V=\langle V, \ d^c\overset{r}{\varphi}\cup\ast\overset{r+1}{\omega}\rangle=
 \langle V, \ d^c(\overset{r}{\varphi}\cup\ast\overset{r+1}{\omega})-(-1)^r\overset{r}{\varphi}\cup d^c\ast\overset{r+1}{\omega}\rangle\\ =\langle\partial V, \ \overset{r}{\varphi}\cup\ast\overset{r+1}{\omega}\rangle+(-1)^{r+1}\langle V, \ \overset{r}{\varphi}\cup\ast\ast^{-1} d^c\ast\overset{r+1}{\omega}\rangle\\=\langle \partial V, \ \overset{r}{\varphi}\cup\ast\overset{r+1}{\omega}\rangle+\langle V, \ \overset{r}{\varphi}\cup\ast(\delta^c\overset{r+1}{\omega})\rangle.
 \end{align*}

\end{proof}
The operator $\delta^c: K^{r+1}(2) \rightarrow K^r(2)$ given by \eqref{2.16} is a discrete analogue of the codifferential $\delta$. For the 0-form $\overset{0}{\omega}\in K^0(2)$ we have $\delta^c\overset{0}{\omega}=0$.
It is obvious from \eqref{2.16} that
$\delta^c\delta^c\overset{r}{\omega}=0$ \ for any $r=1,2.$
 Using \eqref{2.6}--\eqref{2.8}, \eqref{2.10}  and \eqref{2.16} we can calculate
\begin{equation}\label{2.17}
\delta^c\overset{1}{\omega}=\sum_{k,s}(-\Delta_k\omega_{\sigma k,s}^{1}-\Delta_s\omega_{k,\sigma s}^{2})x^{k,s},
\end{equation}
\begin{equation}\label{2.18}
\delta^c\overset{2}{\omega}=\sum_{k,s}(\Delta_s\overset{2}{\omega}_{k,\sigma s})e_1^{k,s}
-(\Delta_k\overset{2}{\omega}_{\sigma k,s})e_2^{k,s}.
\end{equation}
In the particular case $r=0$, the equality \eqref{2.15} can be expressed as
\begin{align*}\label{}
 \sum_{k=1}^N\sum_{s=1}^M((\Delta_k\overset{0}{\varphi}_{k,s}){\omega}_{k,s}^1+(\Delta_s\overset{0}{\varphi}_{k,s}){\omega}_{k,s}^2)\\=
 \sum_{k=1}^N(\overset{0}{\varphi}_{k,\tau M}\omega^2_{k,M}-
\overset{0}{\varphi}_{k,1}\omega^2_{k,0})+\sum_{s=1}^M(\overset{0}{\varphi}_{\tau N,s}\omega^1_{N,s}-
\overset{0}{\varphi}_{1,s}\omega^1_{0,s})\\+
\sum_{k=1}^N\sum_{s=1}^M\overset{0}{\varphi}_{k,s}(-\Delta_k\omega_{\sigma k,s}^{1}-\Delta_s\omega_{k,\sigma s}^{2}),
 \end{align*}
 where $\overset{1}{\omega}_{k,s}=\{\omega^1_{k,s}, \ \omega^2_{k,s}\}$.
 The similar equality holds in the case $r=1$.

It should be noted that the relation \eqref{2.15} includes not only the forms with the components $\overset{r}{\varphi}_{k,s}$ and $\overset{r+1}{\omega}_{k,s}$, where the subscripts $k, s$ would run only over the values from \eqref{2.12}, but also the components
$\overset{r}{\varphi}_{0,s}$, \ $\overset{r}{\varphi}_{\tau N,s}$, \ $\overset{r+1}{\omega}_{0,s}$, $\overset{r+1}{\omega}_{\tau N,s}$, \
$\overset{r}{\varphi}_{k,0}$, \ $\overset{r}{\varphi}_{k, \tau M}$, $\overset{r+1}{\omega}_{k,0}$  and $\overset{r+1}{\omega}_{k, \tau M}$.

Let us set
\begin{align}\label{2.19}
\overset{r}{\omega}_{0,s}=\overset{r}{\omega}_{N,s}, \qquad \overset{r}{\omega}_{\tau N,s}=\overset{r}{\omega}_{1,s},  \qquad s=1,2,.., M, \nonumber \\
\overset{r}{\omega}_{k,0}=\overset{r}{\omega}_{k,M}, \qquad \overset{r}{\omega}_{k,\tau M}=\overset{r}{\omega}_{k,1}, \qquad k=1,2, ..., N.
\end{align}
For $r$-forms satisfying conditions \eqref{2.19} the inner product \eqref{2.14} generates the finite-dimensional Hilbert spaces $H^r(V)$.
Now we consider the operators
\begin{equation*}\label{}
d^c: H^r(V) \rightarrow H^{r+1}(V),  \qquad \delta^c: H^{r+1}(V) \rightarrow H^r(V).
\end{equation*}
\begin{prop}
 Let $\overset{r}\varphi\in H^r(V)$  and $\overset{r+1}\omega\in H^{r+1}(V)$,  $r=0,1$. Then we have
\begin{equation}\label{2.20}
 (d^c\overset{r}\varphi, \ \overset{r+1}\omega)_V=(\overset{r}\varphi, \ \delta^c\overset{r+1}\omega)_V.
\end{equation}
\end{prop}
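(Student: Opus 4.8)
The plan is to read off \eqref{2.20} directly from the already-established identity \eqref{2.15}. That identity expresses the defect between $(d^c\overset{r}\varphi,\overset{r+1}\omega)_V$ and $(\overset{r}\varphi,\delta^c\overset{r+1}\omega)_V$ as the single boundary contraction $\langle\partial V,\overset{r}\varphi\cup\ast\overset{r+1}\omega\rangle$. Every form in $H^r(V)$ satisfies the periodicity conditions \eqref{2.19}, so \eqref{2.15} applies without change, and proving \eqref{2.20} is equivalent to establishing
\[
\langle\partial V,\ \overset{r}\varphi\cup\ast\overset{r+1}\omega\rangle=0,\qquad r=0,1,
\]
for forms subject to \eqref{2.19}. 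The whole argument therefore reduces to showing that this boundary term disappears once the opposite sides of the rectangle $V$ are identified.

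To make it explicit I would substitute \eqref{2.13} for $\partial V$ and evaluate the contraction with the pairing \eqref{2.3}. Since $\overset{r}\varphi\cup\ast\overset{r+1}\omega$ is a $1$-form, it is paired against the four boundary edges of $V$, producing a sum of four edge terms: two horizontal terms sitting at $s=1$ and $s=\tau M$ (the $e_1$-components) and two vertical terms at $k=1$ and $k=\tau N$ (the $e_2$-components). For $r=0$ these four terms are precisely the two edge sums on the right-hand side of \eqref{2.15} written out after \eqref{2.18}. For $r=1$ I would run the same computation: using \eqref{2.10} gives $\ast\overset{2}\omega=\sum_{k,s}\overset{2}\omega_{\sigma k,\sigma s}x^{k,s}$, the $\cup$-rules then give $\overset{1}\varphi\cup\ast\overset{2}\omega=\sum_{k,s}(\varphi^1_{k,s}\overset{2}\omega_{k,\sigma s}e_1^{k,s}+\varphi^2_{k,s}\overset{2}\omega_{\sigma k,s}e_2^{k,s})$, and pairing with \eqref{2.13} yields the four analogous edge terms.

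The cancellation is then supplied by \eqref{2.19} alone. In the horizontal terms the contribution from $s=\tau M$ becomes, after the substitutions $\overset{r}\varphi_{k,\tau M}=\overset{r}\varphi_{k,1}$ and $\overset{r+1}\omega_{k,0}=\overset{r+1}\omega_{k,M}$, the exact negative of the contribution from $s=1$; symmetrically, the vertical terms cancel under $\overset{r}\varphi_{\tau N,s}=\overset{r}\varphi_{1,s}$ and $\overset{r+1}\omega_{0,s}=\overset{r+1}\omega_{N,s}$. Hence the boundary term vanishes and \eqref{2.20} follows. I expect the only real difficulty to be bookkeeping rather than conceptual: one must track the shifts $\tau,\sigma$ introduced by $\ast$ and by the $\cup$-product carefully enough to see that the evaluation indices on opposite edges differ by exactly the shift that \eqref{2.19} identifies, so that the two edge contributions match termwise. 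As a cross-check I would re-derive the same four edge terms directly from the component formulas \eqref{2.6}--\eqref{2.7} and \eqref{2.17}--\eqref{2.18} by discrete summation by parts, where the boundary terms appear as telescoping endpoints.
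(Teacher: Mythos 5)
Your proposal is correct and takes essentially the same route as the paper: it invokes Proposition~2.2 (identity \eqref{2.15}) and then kills the boundary contraction $\langle\partial V,\ \overset{r}{\varphi}\cup\ast\overset{r+1}{\omega}\rangle$ for $r=0,1$ by termwise cancellation under the periodicity conditions \eqref{2.19}, which is exactly the paper's argument. Your explicit expansion $\overset{1}{\varphi}\cup\ast\overset{2}{\omega}=\sum_{k,s}\bigl(\varphi^1_{k,s}\overset{2}{\omega}_{k,\sigma s}e_1^{k,s}+\varphi^2_{k,s}\overset{2}{\omega}_{\sigma k,s}e_2^{k,s}\bigr)$ reproduces the paper's displayed boundary sums, so the only difference is that you spell out the bookkeeping the paper leaves implicit.
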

\begin{proof}
In fact,  by use of conditions \eqref{2.19} one has
\begin{equation*}\label{}
\langle \partial V, \ \overset{0}{\varphi}\cup\ast\overset{1}{\omega}\rangle=\sum_{k=1}^N(\overset{0}{\varphi}_{k,\tau M}\omega^2_{k,M}-
\overset{0}{\varphi}_{k,1}\omega^2_{k,0})+\sum_{s=1}^M(\overset{0}{\varphi}_{\tau N,s}\omega^1_{N,s}-
\overset{0}{\varphi}_{1,s}\omega^1_{0,s})=0,
\end{equation*}
\begin{equation*}\label{}
\langle \partial V, \ \overset{1}{\varphi}\cup\ast\overset{2}{\omega}\rangle=\sum_{k=1}^N(\varphi^1_{k,1}\overset{2}{\omega}_{k,0}-
\varphi^1_{k, \tau M}\overset{2}{\omega}_{k,M})+\sum_{s=1}^M(\varphi^2_{\tau N,s}\overset{2}{\omega}_{N,s}-
\varphi^2_{1, s}\overset{2}{\omega}_{0,s})=0,
\end{equation*}
where $\overset{1}{\omega}_{k,s}=\{\omega^1_{k,s}, \ \omega^2_{k,s}\}$ and $\overset{1}{\varphi}_{k,s}=\{\varphi^1_{k,s}, \ \varphi^2_{k,s}\}$.
Hence by Proposition~2.2 it follows \eqref{2.20}.
\end{proof}

\section{Discrete Hodge decomposition}
In this section, we discuss the properties of discrete analogues of the Laplacian and Hodge-Dirac operators using the concepts of the previous section. We also present a discrete version of the Hodge decomposition theorem, emphasizing  that  it provides an exact counterpart to the continuum theory.

Let us consider the operator
\begin{equation*}\label{}
\Delta^c=d^c\delta^c+\delta^cd^c: H^r(V) \rightarrow H^r(V).
\end{equation*}
This is a discrete analogue of the Laplacian \eqref{1.2}.
\begin{prop}
 For any  $r$-form $\varphi\in H^r(V)$   we have
\begin{equation*}\label{}
 (\Delta^c\varphi, \ \varphi)_V\geq 0
\end{equation*}
and $(\Delta^c\varphi, \ \varphi)_V=0$ if and only if $\Delta^c\varphi=0$.
\end{prop}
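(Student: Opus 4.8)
The plan is to reduce everything to the adjointness relation \eqref{2.20} of Proposition~2.3, exactly as in the continuum proof that the Laplacian is a nonnegative operator. The identity $\Delta^c=d^c\delta^c+\delta^cd^c$ will turn the quadratic form $(\Delta^c\varphi,\varphi)_V$ into a sum of two manifestly nonnegative terms once $d^c$ and $\delta^c$ are moved across the inner product.

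First I would compute $(\Delta^c\varphi,\varphi)_V$ term by term. For an $r$-form $\varphi\in H^r(V)$ the form $\delta^c\varphi$ lies in $H^{r-1}(V)$ and $d^c\varphi$ lies in $H^{r+1}(V)$; applying \eqref{2.20} to the pair $\delta^c\varphi,\varphi$ gives $(d^c\delta^c\varphi,\varphi)_V=(\delta^c\varphi,\delta^c\varphi)_V$, while applying \eqref{2.20} to the pair $\varphi,d^c\varphi$ together with the symmetry of the inner product gives $(\delta^cd^c\varphi,\varphi)_V=(d^c\varphi,d^c\varphi)_V$. Adding the two identities yields
\begin{equation*}
(\Delta^c\varphi,\ \varphi)_V=(\delta^c\varphi,\ \delta^c\varphi)_V+(d^c\varphi,\ d^c\varphi)_V.
\end{equation*}

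The first assertion is then immediate: by the explicit formula for the inner product displayed just after \eqref{2.14}, each quantity $(\psi,\psi)_V$ is a sum of squares of the real components of $\psi$, so both summands on the right are nonnegative and hence $(\Delta^c\varphi,\varphi)_V\ge 0$. For the equivalence, if $\Delta^c\varphi=0$ then trivially $(\Delta^c\varphi,\varphi)_V=0$. Conversely, if $(\Delta^c\varphi,\varphi)_V=0$, then since the inner product is positive definite both terms must vanish, forcing $d^c\varphi=0$ and $\delta^c\varphi=0$; substituting these into $\Delta^c\varphi=d^c\delta^c\varphi+\delta^cd^c\varphi$ gives $\Delta^c\varphi=0$. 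The degenerate cases $r=0$, where $\delta^c\varphi=0$ automatically, and $r=2$, where $d^c\varphi=0$, are covered by the same computation.

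The one point that genuinely needs attention — and the only place where the argument could fail — is the legitimacy of invoking \eqref{2.20}, since Proposition~2.3 presupposes that both arguments obey the periodicity conditions \eqref{2.19}. Thus I must check that $d^c$ maps $H^r(V)$ into $H^{r+1}(V)$ and that $\delta^c$ maps $H^{r+1}(V)$ into $H^r(V)$, i.e. that the conditions \eqref{2.19} are preserved under both operators. This holds because \eqref{2.19} is invariant under the index shifts $\tau,\sigma$ of \eqref{2.1}, while $d^c$ and $\delta^c$ are built from the difference operators \eqref{2.8}, which involve only such shifts; the verification reduces to matching indices in \eqref{2.6}, \eqref{2.7}, \eqref{2.17} and \eqref{2.18} against \eqref{2.19}, a routine check.
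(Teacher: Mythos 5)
Your proof is correct and follows essentially the same route as the paper: both use the adjointness relation \eqref{2.20} to rewrite $(\Delta^c\varphi,\varphi)_V$ as $\|\delta^c\varphi\|^2+\|d^c\varphi\|^2$, from which nonnegativity and the vanishing criterion follow by positive definiteness of the inner product. Your extra check that $d^c$ and $\delta^c$ preserve the periodicity conditions \eqref{2.19} is a sound addition (the paper leaves this implicit when it introduces $d^c,\delta^c$ as maps between the spaces $H^r(V)$), but it does not change the argument.
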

\begin{proof}
By Proposition~2.3 one has
\begin{align*}\label{}
 (\Delta^c\varphi, \ \varphi)_V=(d^c\delta^c\varphi, \ \varphi)_V+(\delta^cd^c\varphi, \ \varphi)_V\\=
 (\delta^c\varphi, \ \delta^c\varphi)_V+(d^c\varphi, \ d^c\varphi)_V=\|\delta^c\varphi\|^2+\|d^c\varphi\|^2,
\end{align*}
where $\|\cdot\|$ denotes the norm and
\begin{equation*}
 \|\varphi\|^2=(\varphi, \ \varphi)_V=\sum_{k=1}^N\sum_{s=1}^M({\varphi}_{k,s})^2.
 \end{equation*}
From this if $(\Delta^c\varphi, \ \varphi)_V=0$ then $\|\delta^c\varphi\|^2=0$ and $\|d^c\varphi\|^2=0$.  It gives $\delta^c\varphi=0$ and $d^c\varphi=0$. Hence
\begin{equation*}
\Delta^c\varphi=d^c\delta^c\varphi+\delta^cd^c\varphi=0.
\end{equation*}
\end{proof}
\begin{cor}
$\Delta^c\varphi=0$ if and only if $\delta^c\varphi=0$ and $d^c\varphi=0$.
\end{cor}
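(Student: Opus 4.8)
The plan is to obtain both implications as formal consequences of Proposition~3.1, whose proof already establishes the key identity $(\Delta^c\varphi, \varphi)_V = \|\delta^c\varphi\|^2 + \|d^c\varphi\|^2$ together with the accompanying positivity statement. Thus no new computation is needed; the corollary is simply a matter of reading that identity in both directions.

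First I would dispose of the easy implication. If $\delta^c\varphi = 0$ and $d^c\varphi = 0$, then from the definition $\Delta^c = d^c\delta^c + \delta^c d^c$ we get $\Delta^c\varphi = d^c(\delta^c\varphi) + \delta^c(d^c\varphi) = 0$ directly, without any appeal to the inner product structure.

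For the converse I would assume $\Delta^c\varphi = 0$ and pair both sides with $\varphi$, so that $(\Delta^c\varphi, \varphi)_V = 0$. The identity from Proposition~3.1 then gives $\|\delta^c\varphi\|^2 + \|d^c\varphi\|^2 = 0$. Since each summand is nonnegative, both must vanish, and because the inner product is positive definite---indeed $(\varphi, \varphi)_V = \sum_{k=1}^N\sum_{s=1}^M(\varphi_{k,s})^2$---the conditions $\|\delta^c\varphi\| = 0$ and $\|d^c\varphi\| = 0$ force $\delta^c\varphi = 0$ and $d^c\varphi = 0$.

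I do not expect any genuine obstacle here. All of the analytic content, namely the adjointness relation \eqref{2.20} of Proposition~2.3 that underlies the identity in Proposition~3.1, is already in place, so the corollary is purely a matter of unwinding that identity. The only point deserving a word of care is the positive definiteness of $(\cdot, \cdot)_V$, which is precisely what licenses the passage from vanishing norms to vanishing forms in the converse direction.
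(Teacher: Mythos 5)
Your proposal is correct and follows exactly the route the paper intends: the corollary is read off from the identity $(\Delta^c\varphi,\ \varphi)_V=\|\delta^c\varphi\|^2+\|d^c\varphi\|^2$ established in the proof of Proposition~3.1, with the converse direction being the trivial consequence of $\Delta^c=d^c\delta^c+\delta^cd^c$. Your extra remark on the positive definiteness of $(\cdot,\ \cdot)_V$ is a sensible point of care, and it is indeed the only ingredient the paper uses implicitly without comment.
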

\begin{prop}
 The operator  $\Delta^c: H^r(V) \rightarrow H^r(V)$  is self-adjoint, i.e.
\begin{equation*}\label{}
 (\Delta^c\varphi, \ \omega)_V=(\varphi, \ \Delta^c\omega)_V.
\end{equation*}
\end{prop}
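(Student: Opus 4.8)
The plan is to deduce self-adjointness of $\Delta^c$ directly from the adjoint relation \eqref{2.20} of Proposition~2.3 together with the symmetry of the inner product. Since $(\cdot,\cdot)_V$ is real-valued, the formula $(\overset{r}{\varphi},\overset{r}{\omega})_V=\sum_{k=1}^N\sum_{s=1}^M\overset{r}{\varphi}_{k,s}\overset{r}{\omega}_{k,s}$ shows at once that it is symmetric, i.e. $(\overset{r}{\varphi},\overset{r}{\omega})_V=(\overset{r}{\omega},\overset{r}{\varphi})_V$. This symmetry, combined with \eqref{2.20}, is all the machinery I expect to need.

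First I would expand $\Delta^c=d^c\delta^c+\delta^cd^c$ and use bilinearity of $(\cdot,\cdot)_V$ to split the left-hand side into two terms, then push one operator across the inner product in each term by \eqref{2.20}:
\begin{align*}
(\Delta^c\varphi,\ \omega)_V&=(d^c\delta^c\varphi,\ \omega)_V+(\delta^cd^c\varphi,\ \omega)_V\\
&=(\delta^c\varphi,\ \delta^c\omega)_V+(d^c\varphi,\ d^c\omega)_V.
\end{align*}
For the first term \eqref{2.20} is applied with $\delta^c\varphi$ in the role of the $r$-form; for the second I would first use symmetry to write $(\delta^cd^c\varphi,\ \omega)_V=(\omega,\ \delta^cd^c\varphi)_V$, then apply \eqref{2.20} and symmetry once more. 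The resulting expression $(\delta^c\varphi,\ \delta^c\omega)_V+(d^c\varphi,\ d^c\omega)_V$ is visibly symmetric in $\varphi$ and $\omega$. Running the identical computation on $(\varphi,\ \Delta^c\omega)_V$ yields the same expression, which gives the claim. Note that no case distinction on $r$ is required: for $r=0$ one has $\delta^c\varphi=0$ and for $r=2$ one has $d^c\varphi=0$, so the vanishing summand simply drops out while the identity persists.

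The computation itself is routine; the point that genuinely needs checking is that \eqref{2.20} is legitimately applicable at each step. Proposition~2.3 holds only for forms lying in the Hilbert spaces $H^r(V)$, that is, for forms satisfying the periodicity conditions \eqref{2.19} that force the boundary pairing $\langle\partial V,\ \cdot\rangle$ in \eqref{2.15} to vanish. Thus before invoking \eqref{2.20} on $\delta^c\varphi$, $d^c\varphi$, $\delta^c\omega$ and $d^c\omega$ I must confirm that $d^c$ and $\delta^c$ map $H^r(V)$ into the appropriate $H^{r\pm1}(V)$, i.e. that the periodicity \eqref{2.19} is preserved by both operators. This follows from their explicit index-shift descriptions \eqref{2.6}--\eqref{2.8}, \eqref{2.17} and \eqref{2.18}, but it is the one ingredient that must be verified rather than quoted, and I regard it as the main (if modest) obstacle.
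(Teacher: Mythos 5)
Your proof is correct and follows exactly the route the paper intends: the paper's own proof of this proposition is the one-line remark ``By \eqref{2.20} it is obvious,'' and your computation reducing both sides to the symmetric expression $(\delta^c\varphi,\ \delta^c\omega)_V+(d^c\varphi,\ d^c\omega)_V$ is precisely the argument being left implicit (it is also the same manipulation the paper writes out in proving Proposition~3.1). Your extra check that $d^c$ and $\delta^c$ preserve the periodicity conditions \eqref{2.19} is a point the paper silently assumes when it declares $d^c: H^r(V)\rightarrow H^{r+1}(V)$ and $\delta^c: H^{r+1}(V)\rightarrow H^r(V)$ before Proposition~2.3, so flagging it is sound diligence rather than a deviation.
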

\begin{proof}
By  \eqref{2.20} it is obvious.
\end{proof}
Consider the spaces
\begin{equation*}
R_{d^c}^r=\{d^c\varphi\in H^r(V): \ \varphi\in H^{r-1}(V)\},
\end{equation*}
\begin{equation*}
R_{\delta^c}^r=\{\delta^c\omega\in H^r(V): \  \omega\in H^{r+1}(V)\},
\end{equation*}
and
\begin{equation*}
N_{\Delta^c}^r=\{\psi\in H^r(V): \  \Delta^c\psi=0\}.
\end{equation*}
By analogy with the continuum case the discrete $r$-form $\omega$ is called closed if $d^c\omega=0$ and exact if $\omega\in R_{d^c}^r$.
\begin{prop}
For each $r=0,1,2$ we have the direct sum decomposition
 \begin{equation*}\label{}
 H^r(V)=R_{d^c}^r\oplus R_{\delta^c}^r\oplus N_{\Delta^c}^r.
\end{equation*}
\end{prop}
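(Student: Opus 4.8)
The plan is to establish the decomposition $H^r(V)=R_{d^c}^r\oplus R_{\delta^c}^r\oplus N_{\Delta^c}^r$ in two stages: first showing that the three subspaces are mutually orthogonal (hence their sum is direct), and then showing that together they span all of $H^r(V)$. Because $H^r(V)$ is a finite-dimensional Hilbert space under the inner product $(\cdot,\cdot)_V$, orthogonality immediately gives the directness of the sum, and the spanning claim reduces to a dimension count once orthogonality is in hand.

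First I would verify pairwise orthogonality. For $R_{d^c}^r\perp R_{\delta^c}^r$, take $d^c\varphi$ with $\varphi\in H^{r-1}(V)$ and $\delta^c\omega$ with $\omega\in H^{r+1}(V)$; using the adjoint relation \eqref{2.20} we get
\begin{equation*}
(d^c\varphi,\ \delta^c\omega)_V=(d^cd^c\varphi,\ \omega)_V=0,
\end{equation*}
since $d^cd^c=0$. For $R_{d^c}^r\perp N_{\Delta^c}^r$, take $d^c\varphi$ and $\psi$ with $\Delta^c\psi=0$; by Corollary~3.2 we have $\delta^c\psi=0$, so again by \eqref{2.20},
\begin{equation*}
(d^c\varphi,\ \psi)_V=(\varphi,\ \delta^c\psi)_V=0.
\end{equation*}
Symmetrically, for $R_{\delta^c}^r\perp N_{\Delta^c}^r$ we use $d^c\psi=0$ from Corollary~3.2 to get $(\delta^c\omega,\ \psi)_V=(\omega,\ d^c\psi)_V=0$. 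This settles the directness.

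Next I would prove that the three subspaces span $H^r(V)$. The cleanest route uses the self-adjointness of $\Delta^c$ from Proposition~3.3: a self-adjoint operator on a finite-dimensional inner product space admits the orthogonal decomposition $H^r(V)=N_{\Delta^c}^r\oplus R_{\Delta^c}^r$, where $R_{\Delta^c}^r$ is the image of $\Delta^c$. It then suffices to show $R_{\Delta^c}^r\subseteq R_{d^c}^r\oplus R_{\delta^c}^r$, which is immediate from the definition $\Delta^c=d^c\delta^c+\delta^cd^c$: for any $\chi\in H^r(V)$ we can write $\Delta^c\chi=d^c(\delta^c\chi)+\delta^c(d^c\chi)$, the first summand lying in $R_{d^c}^r$ and the second in $R_{\delta^c}^r$. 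Combined with the orthogonality already established, this gives the full decomposition.

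The main obstacle is the spanning step, and specifically the clean identification $H^r(V)=N_{\Delta^c}^r\oplus R_{\Delta^c}^r$ for the self-adjoint operator $\Delta^c$. This relies on the elementary linear-algebra fact that for a self-adjoint operator $T$ on a finite-dimensional space the kernel and image are orthogonal complements, i.e. $\ker T=(\operatorname{im}T)^\perp$; one direction follows from $(T\chi,\psi)=(\chi,T\psi)$, and equality of dimensions closes the argument. All other steps are routine once \eqref{2.20}, Corollary~3.2, and Proposition~3.3 are invoked, so the real content is assembling these pieces in the right order rather than any single difficult estimate.
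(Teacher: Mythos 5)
Your proof is correct, but it takes a genuinely different route from the paper's. The paper starts from the two orthogonal splittings \eqref{3.1}, $H^1(V)=R_{d^c}^1\oplus N_{\delta^c}^1$ and $H^1(V)=R_{\delta^c}^1\oplus N_{d^c}^1$ (range of an operator against kernel of its adjoint), shows that each range sits inside the kernel complementing the other ($R_{\delta^c}^1\subset N_{\delta^c}^1$ and likewise for $R_{d^c}^1$), refines this to $H^1(V)=R_{d^c}^1\oplus R_{\delta^c}^1\oplus\bigl(N_{d^c}^1\cap N_{\delta^c}^1\bigr)$, and only at the end invokes Corollary~3.2 to identify $N_{d^c}^1\cap N_{\delta^c}^1$ with $N_{\Delta^c}^1$; the degrees $r=0,2$ are then dispatched as degenerate cases using $R_{d^c}^0=\{0\}$ and $R_{\delta^c}^2=\{0\}$. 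You instead verify mutual orthogonality of the three subspaces directly from \eqref{2.20} and Corollary~3.2, and obtain spanning from the kernel--image decomposition $H^r(V)=N_{\Delta^c}^r\oplus R_{\Delta^c}^r$ of the self-adjoint operator $\Delta^c$ together with the inclusion $\operatorname{im}\Delta^c\subseteq R_{d^c}^r+R_{\delta^c}^r$, which is immediate from $\Delta^c=d^c\delta^c+\delta^cd^c$. Both arguments ultimately rest on the same finite-dimensional fact---the kernel of an operator is the orthogonal complement of the image of its adjoint---but the paper applies it to the pair $(d^c,\delta^c)$ (implicitly, when asserting \eqref{3.1}), whereas you apply it to $\Delta^c$ itself. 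Your route handles all three degrees $r=0,1,2$ uniformly with no case analysis, and the harmonic space enters as $\ker\Delta^c$ from the outset rather than being identified a posteriori; the paper's route never needs the image of the Laplacian or the self-adjointness statement of Proposition~3.3, getting by with the adjointness relation \eqref{2.20} alone. Both are complete and rigorous at the same level, modulo the standard linear-algebra facts each takes for granted.
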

\begin{proof}
The space $H^1(V)$ decomposes into
\begin{equation}\label{3.1}
 H^1(V)=R_{d^c}^1\oplus  N_{\delta^c}^1,  \qquad
 H^1(V)=R_{\delta^c}^1\oplus N_{d^c}^1,
\end{equation}
where $N_{\delta^c}^1$ and $N_{d^c}^1$ are the orthogonal complements of the corresponding spaces.
For any $\omega\in R_{\delta^c}^1$ we have $\omega=\delta^c\psi$ and
\begin{equation*}\label{}
 (d^c\varphi, \ \omega)_V=(d^c\varphi, \ \delta^c\psi)_V=(d^cd^c\varphi, \ \psi)_V=0
\end{equation*}
for any $\varphi\in H^0(V)$. Therefore $\omega$ is orthogonal to $R_{d^c}^1$. It follows that
\begin{equation*}\label{}
 R_{\delta^c}^1\subset N_{\delta^c}^1.
\end{equation*}
Similarly we find
$R_{d^c}^1\subset N_{\delta^c}^1$.
Hence \eqref{3.1} becomes
\begin{equation*}\label{}
 H^1(V)=R_{d^c}^1\oplus  R_{\delta^c}^1\oplus N^1,
 \end{equation*}
 where
 \begin{equation*}\label{}
 N^1=N_{\delta^c}^1\cap N_{d^c}^1=\{\varphi\in H^1(V): \  d^c\varphi=0, \ \delta^c\varphi=0\}.
 \end{equation*}
 By Corollary~3.2 we have  $N^1=N_{\Delta^c}^1$.

 Similar reasonings apply  to the spaces $H^0(V)$ and $H^2(V)$. Thus we have
 \begin{equation*}\label{}
 H^0(V)=R_{\delta^c}^0\oplus N_{\Delta^c}^0,  \qquad
 H^2(V)=R_{d^c}^2\oplus  N_{\Delta^c}^2,
\end{equation*}
since $R_{d^c}^0=\{0\}$ and $R_{\delta^c}^2=\{0\}$.
\end{proof}
The Proposition~3.4 is a discrete version of the well-known Hodge decomposition theorem (see, e.g., \cite{Warner}).

Let $\Omega$ be an inhomogeneous discrete form, i.e. $\Omega=\overset{0}\omega+\overset{1}\omega+\overset{2}\omega$, where $\overset{r}\omega\in H^r(V)$. The inner product \eqref{2.14} can be extend to an inner product of inhomogeneous discrete forms by the rule
\begin{equation}\label{3.2}
 (\Omega, \ \Phi)_V=\sum_{r=0}^2(\overset{r}{\omega}, \ \overset{r}{\varphi})_V,
 \end{equation}
 where $\Phi=\overset{0}\varphi+\overset{1}\varphi+\overset{2}\varphi$.
 The inner product \eqref{3.2} generates the finite-dimensional Hilbert space $H(V)$. It is true that
 \begin{equation*}\label{}
 H(V)=H^0(V)\oplus  H^1(V)\oplus H^2(V).
 \end{equation*}
 By Proposition~3.4 the following holds
 \begin{equation}\label{3.3}
 H(V)=R_{d^c}\oplus R_{\delta^c}\oplus N_{\Delta^c},
 \end{equation}
 where
 \begin{equation*}\label{}
 R_{d^c}=R_{d^c}^1\oplus R_{d^c}^2, \qquad  R_{\delta^c}=R_{\delta^c}^0\oplus R_{\delta^c}^1,
 \end{equation*}
and
\begin{equation*}\label{}
  N_{\Delta^c}= N_{\Delta^c}^0\oplus  N_{\Delta^c}^1\oplus N_{\Delta^c}^2.
 \end{equation*}
 The discrete Hodge-Dirac operator is defined as
 \begin{equation}\label{3.4}
d^c+\delta^c: H(V) \rightarrow H(V).
\end{equation}
 \begin{prop}
 Operator  \eqref{3.4}  is self-adjoint with respect to the inner product \eqref{3.2}, i.e.
\begin{equation*}\label{}
 ((d^c+\delta^c)\Phi, \ \Omega)_V=(\Phi, \ (d^c+\delta^c)\Omega)_V.
\end{equation*}
\end{prop}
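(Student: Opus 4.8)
The plan is to reduce the self-adjointness of the inhomogeneous operator $d^c+\delta^c$ to the componentwise adjoint relation \eqref{2.20} of Proposition~2.3. First I would write $\Phi=\overset{0}\varphi+\overset{1}\varphi+\overset{2}\varphi$ and $\Omega=\overset{0}\omega+\overset{1}\omega+\overset{2}\omega$, and record how $d^c+\delta^c$ acts on homogeneous pieces. Since $d^c$ raises the degree by one, $\delta^c$ lowers it by one, and $d^c\overset{2}\varphi=0$, $\delta^c\overset{0}\varphi=0$, the image $(d^c+\delta^c)\Phi$ has degree-$0$ part $\delta^c\overset{1}\varphi$, degree-$1$ part $d^c\overset{0}\varphi+\delta^c\overset{2}\varphi$, and degree-$2$ part $d^c\overset{1}\varphi$, with the analogous expressions for $\Omega$.

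Next I would expand both sides using the definition \eqref{3.2}. Because $(\cdot,\cdot)_V$ vanishes on forms of different degree, only the same-degree pairings survive, and both $((d^c+\delta^c)\Phi,\ \Omega)_V$ and $(\Phi,\ (d^c+\delta^c)\Omega)_V$ collapse to a sum of four scalar products. Matching degrees, the left-hand side becomes
\begin{equation*}
(\delta^c\overset{1}\varphi,\ \overset{0}\omega)_V+(d^c\overset{0}\varphi,\ \overset{1}\omega)_V+(\delta^c\overset{2}\varphi,\ \overset{1}\omega)_V+(d^c\overset{1}\varphi,\ \overset{2}\omega)_V.
\end{equation*}
I would then treat each term using the symmetry of the real inner product together with \eqref{2.20}. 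For the terms carrying $d^c$ I apply \eqref{2.20} directly, e.g.\ $(d^c\overset{0}\varphi,\ \overset{1}\omega)_V=(\overset{0}\varphi,\ \delta^c\overset{1}\omega)_V$; for the terms carrying $\delta^c$ I first use symmetry and then \eqref{2.20} with the roles of $\varphi$ and $\omega$ interchanged, e.g.\ $(\delta^c\overset{1}\varphi,\ \overset{0}\omega)_V=(d^c\overset{0}\omega,\ \overset{1}\varphi)_V=(\overset{1}\varphi,\ d^c\overset{0}\omega)_V$. Carrying this out for all four terms produces exactly the four same-degree pairings of $(\Phi,\ (d^c+\delta^c)\Omega)_V$, which proves the identity.

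The computation is essentially bookkeeping, and there is no genuine analytic obstacle: the single nontrivial input, namely the adjointness $(d^c\overset{r}\varphi,\ \overset{r+1}\omega)_V=(\overset{r}\varphi,\ \delta^c\overset{r+1}\omega)_V$ on $H^r(V)$, is already supplied by Proposition~2.3 (where the boundary contribution $\langle\partial V,\ \cdot\rangle$ has been killed by the periodicity conditions \eqref{2.19}). The only point requiring care is the degree accounting: one must use \eqref{2.20} at the correct value of $r$ ($r=0$ or $r=1$) for each term, and invoke the vanishing of cross-degree products before applying the adjoint relation, so that every summand on the left is matched with its correct partner on the right.
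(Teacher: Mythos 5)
Your proposal is correct and takes essentially the same approach as the paper: the paper's proof of this proposition is simply the remark that it follows from \eqref{2.20}, and your argument is exactly that reduction, spelled out degree by degree. The componentwise expansion, the vanishing of cross-degree pairings, and the application of Proposition~2.3 (with symmetry of the inner product for the $\delta^c$ terms) are precisely the bookkeeping the paper leaves implicit.
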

\begin{proof}
By  \eqref{2.20} it is obvious.
\end{proof}
\begin{prop}
 \begin{equation*}\label{}
(d^c+\delta^c)\Omega=0 \quad \mbox{if and only if} \quad \Omega \in N_{\Delta^c}.
\end{equation*}
\end{prop}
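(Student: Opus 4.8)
The plan is to exploit the factorisation $\Delta^c=(d^c+\delta^c)^2$ together with the positivity established in Proposition~3.1 and the self-adjointness from Proposition~3.5; no deep idea is needed, only a correct assembly of these ingredients. The forward implication is immediate: if $(d^c+\delta^c)\Omega=0$, then applying $d^c+\delta^c$ once more and using $d^cd^c=0$, $\delta^c\delta^c=0$ gives
\[
\Delta^c\Omega=(d^c\delta^c+\delta^cd^c)\Omega=(d^c+\delta^c)^2\Omega=(d^c+\delta^c)\bigl((d^c+\delta^c)\Omega\bigr)=0,
\]
so that $\Omega\in N_{\Delta^c}$.

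For the converse I would argue via the norm of $(d^c+\delta^c)\Omega$. Since $d^c+\delta^c$ is self-adjoint with respect to $(\cdot,\cdot)_V$ by Proposition~3.5,
\[
\|(d^c+\delta^c)\Omega\|^2=\bigl((d^c+\delta^c)\Omega,\ (d^c+\delta^c)\Omega\bigr)_V=\bigl((d^c+\delta^c)^2\Omega,\ \Omega\bigr)_V=(\Delta^c\Omega,\ \Omega)_V.
\]
Hence if $\Omega\in N_{\Delta^c}$, i.e. $\Delta^c\Omega=0$, the right-hand side vanishes, forcing $\|(d^c+\delta^c)\Omega\|^2=0$ and therefore $(d^c+\delta^c)\Omega=0$.

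The only point requiring care is the bookkeeping between homogeneous and inhomogeneous forms, since Propositions~3.1 and~3.5 are phrased for a fixed degree $r$ while $\Omega$ is inhomogeneous. Because $\Delta^c$ preserves degree and the inner product \eqref{3.2} is orthogonal across degrees, one has $(\Delta^c\Omega,\ \Omega)_V=\sum_{r=0}^2(\Delta^c\overset{r}{\omega},\ \overset{r}{\omega})_V$ with each summand $\ge 0$ by Proposition~3.1; thus $\Delta^c\Omega=0$ is equivalent to $\Delta^c\overset{r}{\omega}=0$ for every $r$, which is exactly the meaning of $\Omega\in N_{\Delta^c}$. This is a formal consequence of the machinery already assembled, so I expect no genuine obstacle.

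As a purely algebraic alternative to the norm argument for the converse, I would instead invoke Corollary~3.2 componentwise: degree preservation of $\Delta^c$ turns $\Delta^c\Omega=0$ into $\Delta^c\overset{r}{\omega}=0$ for each $r$, whence $d^c\overset{r}{\omega}=0$ and $\delta^c\overset{r}{\omega}=0$ for all $r$, and summing gives $(d^c+\delta^c)\Omega=\sum_{r}(d^c\overset{r}{\omega}+\delta^c\overset{r}{\omega})=0$. Either route closes the equivalence.
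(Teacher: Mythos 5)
Your proof is correct, but it takes a genuinely different route from the paper's. The paper argues degree by degree: it decomposes $(d^c+\delta^c)\Omega=0$ into the component equations $\delta^c\overset{1}{\omega}=0$, $d^c\overset{1}{\omega}=0$, $d^c\overset{0}{\omega}+\delta^c\overset{2}{\omega}=0$, handles the $1$-form via Corollary~3.2, and disposes of the cross term by an explicit computation combining \eqref{2.20} with the Pythagorean identity $\|d^c\overset{0}{\omega}+\delta^c\overset{2}{\omega}\|^2=\|d^c\overset{0}{\omega}\|^2+\|\delta^c\overset{2}{\omega}\|^2$, dismissing the remaining implication as trivial. You instead treat $\Omega$ globally: the factorization $\Delta^c=(d^c+\delta^c)^2$ (valid because $d^cd^c=0$ and $\delta^c\delta^c=0$) gives one implication at once, and self-adjointness of $d^c+\delta^c$ (Proposition~3.5) reduces the other to the identity $\|(d^c+\delta^c)\Omega\|^2=(\Delta^c\Omega,\ \Omega)_V$ --- the standard fact that $\ker D=\ker D^2$ for a self-adjoint operator $D$ on a finite-dimensional inner product space. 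Your bookkeeping paragraph is not a mere formality but is genuinely needed, since $N_{\Delta^c}$ is defined as the direct sum $N_{\Delta^c}^0\oplus N_{\Delta^c}^1\oplus N_{\Delta^c}^2$, and you correctly supply it via degree preservation of $\Delta^c$. What each approach buys: yours is shorter, uses only the square structure and self-adjointness, and would apply verbatim to any abstract Hodge--Dirac operator; the paper's is more explicit, exhibiting that $d^c\overset{0}{\omega}$ and $\delta^c\overset{2}{\omega}$ vanish separately, which is the orthogonality fact underlying the Hodge decomposition of Proposition~3.4. Your alternative closing argument --- Corollary~3.2 applied in every degree, then summing --- is actually the closest to the paper's proof and even streamlines it, since the paper invokes that corollary only for the $1$-form component and re-derives its content by hand in degrees $0$ and $2$.
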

\begin{proof}
The equation $(d^c+\delta^c)\Omega=0$ can be written as
\begin{align*}\label{}
\delta^c\overset{1}{\omega}=0, \quad  d^c\overset{1}{\omega}=0, \\
d^c\overset{0}{\omega}+\delta^c\overset{2}{\omega}=0.
\end{align*}
Let $\Omega \in N_{\Delta^c}$. This means that $\overset{0}{\omega}\in N_{\Delta^c}^0$, $\overset{1}{\omega}\in N_{\Delta^c}^1$ and $\overset{2}{\omega}\in N_{\Delta^c}^2$
By Corollary~3.2, $\overset{1}{\omega}\in N_{\Delta^c}^1$ if and only if $\delta^c\overset{1}{\omega}=0$ and  $d^c\overset{1}{\omega}=0$.
 It is easy to show that
\begin{equation}\label{3.5}
\|d^c\overset{0}{\omega}+\delta^c\overset{2}{\omega}\|^2 =(d^c\overset{0}{\omega}+\delta^c\overset{2}{\omega}, \ d^c\overset{0}{\omega}+\delta^c\overset{2}{\omega})_V=\|d^c\overset{0}{\omega}\|^2+\|\delta^c\overset{2}{\omega}\|^2
\end{equation}
for any $\overset{0}{\omega}\in H^0(V)$ and $\overset{2}{\omega}\in H^2(V)$.
For $\overset{0}{\omega}\in N_{\Delta^c}^0$ and $\overset{2}{\omega}\in N_{\Delta^c}^2$,  by \eqref{2.20} and \eqref{3.5}, we have
\begin{equation*}\label{}
0=(\delta^cd^c\overset{0}{\omega}, \overset{0}{\omega})_V+(d^c\delta^c\overset{2}{\omega}, \overset{2}{\omega})_V=(d^c\overset{0}{\omega}, \ d^c\overset{0}{\omega})_V+(\delta^c\overset{2}{\omega}, \ \delta^c\overset{2}{\omega})_V=
\|d^c\overset{0}{\omega}+\delta^c\overset{2}{\omega}\|^2
\end{equation*}
and thus $d^c\overset{0}{\omega}+\delta^c\overset{2}{\omega}=0$. The converse is trivially true.
\end{proof}

\begin{prop}
For any inhomogeneous form $F\in R_{d^c}\oplus R_{\delta^c}$, there exists a unique form $\Omega\in R_{d^c}\oplus R_{\delta^c}$ which is a solution to the equation
 \begin{equation}\label{3.6}
(d^c+\delta^c)\Omega=F.
\end{equation}
\end{prop}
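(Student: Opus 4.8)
The plan is to exploit the finite-dimensionality of $H(V)$ together with self-adjointness of the Hodge-Dirac operator. Write $D=d^c+\delta^c$ and set $M=R_{d^c}\oplus R_{\delta^c}$, so that the orthogonal decomposition \eqref{3.3} reads $H(V)=M\oplus N_{\Delta^c}$; in particular $M$ is precisely the orthogonal complement of $N_{\Delta^c}$. By Proposition~3.7 we have $N_{\Delta^c}=\ker D$, and by Proposition~3.6 the operator $D$ is self-adjoint with respect to the inner product \eqref{3.2}. The strategy is to show that $D$ restricts to a bijection of $M$ onto itself, which delivers existence and uniqueness simultaneously.

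First I would check that $D$ maps $M$ into itself. Since $D$ annihilates $N_{\Delta^c}$, the subspace $N_{\Delta^c}$ is $D$-invariant, and for a self-adjoint operator the orthogonal complement of an invariant subspace is again invariant: for $\Omega\in M$ and $\Psi\in N_{\Delta^c}$ one computes $(D\Omega,\ \Psi)_V=(\Omega,\ D\Psi)_V=0$, whence $D\Omega\in N_{\Delta^c}^{\perp}=M$. Thus $D$ restricts to a linear endomorphism $D|_M:M\to M$.

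Next I would establish injectivity of $D|_M$. If $\Omega\in M$ satisfies $D\Omega=0$, then $\Omega\in\ker D=N_{\Delta^c}$ by Proposition~3.7, while at the same time $\Omega\in M=N_{\Delta^c}^{\perp}$; since $M\cap N_{\Delta^c}=\{0\}$ this forces $\Omega=0$. Because $M$ is finite-dimensional, an injective endomorphism of $M$ is automatically surjective, so $D|_M$ is a bijection of $M$ onto itself. This yields both assertions at once: given $F\in M=R_{d^c}\oplus R_{\delta^c}$, surjectivity of $D|_M$ provides a solution $\Omega\in M$ of $(d^c+\delta^c)\Omega=F$, and injectivity guarantees that this solution is unique.

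The argument is essentially routine finite-dimensional linear algebra once Propositions~3.6 and~3.7 are in hand; the only step requiring a moment's care is the invariance $D(M)\subseteq M$, which is where self-adjointness is genuinely used, ensuring that the restricted operator maps back into $R_{d^c}\oplus R_{\delta^c}$ rather than producing a spurious component in $N_{\Delta^c}$. Everything else rests on the orthogonality of the decomposition \eqref{3.3} and on the rank-nullity principle for endomorphisms of a finite-dimensional space.
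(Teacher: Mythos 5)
Your proof is correct and takes essentially the same route as the paper: both arguments combine the self-adjointness of $d^c+\delta^c$, the identification of its kernel with $N_{\Delta^c}$, the orthogonal decomposition \eqref{3.3}, and finite-dimensionality (rank--nullity) to pass between uniqueness and existence. The only difference is explicitness: you verify that $d^c+\delta^c$ maps $R_{d^c}\oplus R_{\delta^c}$ into itself before restricting and applying rank--nullity, a point the paper's terser appeal to ``existence is a consequence of uniqueness and vice versa'' for a self-adjoint operator on a finite-dimensional Hilbert space leaves implicit.
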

\begin{proof}
Since the operator \eqref{3.4} is self-adjoint and $H(V)$ is a finite-dimensional Hilbert space, the existence  of the solution is a consequence of the uniqueness of the solution and vice versa.
By Proposition~3.6, $(d^c+\delta^c)\Omega=0$ implies $\Omega\in N_{\Delta^c}$. From this and by \eqref{3.3}, the uniqueness of the solution $\Omega\in R_{d^c}\oplus R_{\delta^c}$ of Eq.~\eqref{3.6} follows immediately.
\end{proof}
\begin{cor}
If $\Omega\in H(V)$  is a solution of Eq.~\eqref{3.6} then the following holds
 \begin{equation*}\label{}
\|\Omega\|^2\leq c(\|d^c\Omega\|^2+\|\delta^c\Omega\|^2)+\|\Omega_{\Delta^c}\|^2,
\end{equation*}
where $c$ is a constant, $d^c\Omega=d^c\overset{0}{\omega}+d^c\overset{1}{\omega}$,  \ $\delta^c\Omega=\delta^c\overset{1}{\omega}+\delta^c\overset{2}{\omega}$
and $\Omega_{\Delta^c}$ is the projection of $\Omega$ onto  $N_{\Delta^c}$.
\end{cor}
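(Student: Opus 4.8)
The plan is to reduce the estimate to the orthogonal Hodge decomposition \eqref{3.3} together with the invertibility of the Hodge--Dirac operator on $R_{d^c}\oplus R_{\delta^c}$ supplied by Proposition~3.7. First I would split $\Omega\in H(V)$ according to \eqref{3.3} as $\Omega=\Omega'+\Omega_{\Delta^c}$, where $\Omega'\in R_{d^c}\oplus R_{\delta^c}$ and $\Omega_{\Delta^c}\in N_{\Delta^c}$ is the projection of $\Omega$ onto the harmonic part. Since the three summands in \eqref{3.3} are mutually orthogonal, the Pythagorean identity gives $\|\Omega\|^2=\|\Omega'\|^2+\|\Omega_{\Delta^c}\|^2$, so it suffices to bound $\|\Omega'\|^2$ by $c(\|d^c\Omega\|^2+\|\delta^c\Omega\|^2)$. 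Moreover, by Corollary~3.2 the harmonic component satisfies $d^c\Omega_{\Delta^c}=0$ and $\delta^c\Omega_{\Delta^c}=0$, whence $d^c\Omega=d^c\Omega'$ and $\delta^c\Omega=\delta^c\Omega'$; thus the right-hand side is unchanged if $\Omega$ is replaced by $\Omega'$.

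Next I would record that $M:=R_{d^c}\oplus R_{\delta^c}$ is invariant under $d^c+\delta^c$. Writing $\Omega'=\Omega_1+\Omega_2$ with $\Omega_1\in R_{d^c}$ and $\Omega_2\in R_{\delta^c}$, one has $d^c\Omega_1=0$ and $\delta^c\Omega_2=0$ (both composites vanish), so $(d^c+\delta^c)\Omega'=\delta^c\Omega_1+d^c\Omega_2\in R_{\delta^c}\oplus R_{d^c}=M$. By Proposition~3.7 the restriction $(d^c+\delta^c)|_M$ is a bijection of the finite-dimensional space $M$, and by Proposition~3.5 it is self-adjoint; hence it is bounded below, i.e.\ there is a constant $\mu>0$ (the smallest modulus of its eigenvalues) with $\|(d^c+\delta^c)\Omega'\|\geq\mu\|\Omega'\|$ for all $\Omega'\in M$. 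Equivalently, $\|\Omega'\|^2\leq\mu^{-2}\|(d^c+\delta^c)\Omega'\|^2$.

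To finish, I would use the orthogonality $R_{d^c}\perp R_{\delta^c}$ underlying \eqref{3.3} (the same computation that produces \eqref{3.5}), which gives $\|(d^c+\delta^c)\Omega'\|^2=\|d^c\Omega'\|^2+\|\delta^c\Omega'\|^2$ because $d^c\Omega'\in R_{d^c}$ and $\delta^c\Omega'\in R_{\delta^c}$ are orthogonal. Combining this with the lower bound and with $d^c\Omega=d^c\Omega'$, $\delta^c\Omega=\delta^c\Omega'$ yields $\|\Omega'\|^2\leq\mu^{-2}(\|d^c\Omega\|^2+\|\delta^c\Omega\|^2)$; adding $\|\Omega_{\Delta^c}\|^2$ to both sides gives the claimed inequality with $c=\mu^{-2}$. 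Note that the estimate in fact holds for every $\Omega\in H(V)$, the hypothesis that $\Omega$ solve \eqref{3.6} playing no role: it is a coercivity (a priori) bound complementing Proposition~3.7.

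The main obstacle is the lower bound $\|(d^c+\delta^c)\Omega'\|\geq\mu\|\Omega'\|$ of the second paragraph; everything else is bookkeeping with the orthogonal decomposition. This inequality is where the content sits. It requires knowing that $d^c+\delta^c$ has trivial kernel on $M$ (guaranteed by Proposition~3.7, since $(d^c+\delta^c)\Omega'=0$ forces $\Omega'\in N_{\Delta^c}$ by Proposition~3.6 while $M\cap N_{\Delta^c}=\{0\}$) and then invoking finite-dimensionality, through compactness of the unit sphere of $M$, to upgrade injectivity to a uniform lower bound with an $\Omega$-independent constant $\mu$.
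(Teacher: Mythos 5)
Your proof is correct. The paper states this corollary without any proof of its own, and your argument is exactly the intended one: split $\Omega=\Omega'+\Omega_{\Delta^c}$ orthogonally via the Hodge decomposition \eqref{3.3}, note that $R_{d^c}\oplus R_{\delta^c}$ is invariant under $d^c+\delta^c$, and upgrade the bijectivity of Proposition~3.7 (whose injectivity rests on Proposition~3.6) to a uniform lower bound $\|(d^c+\delta^c)\Omega'\|\geq\mu\|\Omega'\|$ by finite-dimensionality, with the cross terms killed by the orthogonality computation behind \eqref{3.5}; your observation that the hypothesis that $\Omega$ solves \eqref{3.6} is never actually used is also accurate.
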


\section{Combinatorial torus}

In this section, we consider an example of a discrete model of the torus in detail. We recall that the torus can be regarded as the topological space obtained by taking a rectangle  and identifying each pair of opposite sides with the same orientation.
Let consider the partitioning of the plane $\mathbb{R}^2$ by the straight lines $x=k$ and $y=s$, where $k,s\in\mathbb{Z}$.
Denote by $V_{k,s}$ an open square bounded by the lines $x=k, \ x=\tau k, \ y=s$ and $y=\tau s$, where $\tau$ is given by \eqref{2.1}.
Denote the vertices of $V_{k,s}$ by $x_{k,s}, \ x_{\tau k,s}, \ x_{k,\tau s}$, \ $x_{\tau k, \tau s}$. Let $e_{k,s}^1$ and $e_{k,s}^2$ be the open intervals $(x_{k,s}, \ x_{\tau k,s})$ and $(x_{k,s}, \ x_{k, \tau s})$, respectively.
Such introduced geometric objects can be identified with the combinatorial objects we have considered in the previous sections.
We identify the collection $V_{k,s}$ with $V$ given by \eqref{2.12} and let $N=M=2$. In this case the conditions \eqref{2.19} take the form
\begin{equation}\label{4.1}
\overset{r}{\omega}_{0,s}=\overset{r}{\omega}_{2,s}, \qquad \overset{r}{\omega}_{3,s}=\overset{r}{\omega}_{1,s}, \qquad
\overset{r}{\omega}_{k,0}=\overset{r}{\omega}_{k,2}, \qquad \overset{r}{\omega}_{k,3}=\overset{r}{\omega}_{k,1},
\end{equation}
where $k=1,2$ and $s=1,2$.
If we identify the points and the intervals on the boundary of $V$ in the following way
\begin{align}\label{4.2}
x_{1,1}=x_{3,1}=x_{1,3}=x_{3,3}, \qquad x_{1,2}=x_{3,2}, \qquad x_{2,1}=x_{2,3}, \nonumber \\
e_{1,1}^1=e_{1,3}^1, \qquad e_{2,1}^1=e_{2,3}^1, \qquad e_{1,1}^2=e_{3,1}^2, \qquad e_{1,2}^2=e_{3,2}^2,
\end{align}
we obtain the geometric object which is homomorphic to the torus (see Figure~1).
Denote by $C(T)$ the complex $C(2)$ which corresponds to the introduced geometric object. We call $C(T)$  a combinatorial torus.
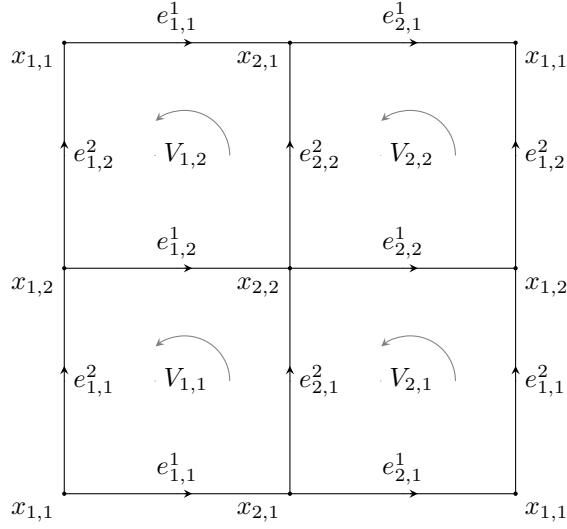
\begin{figure}[h]
\begin{center}
\begin{tikzpicture}
\draw[black, thin] (0,0) -- (3,0);
\draw[black, thin, ->, >=stealth] (0,1.7);
\draw[black, thin, ->, >=stealth] (3,1.7);
\draw[black, thin, ->, >=stealth] (6,1.7);
\draw[black, thin, ->, >=stealth] (0,4.7);
\draw[black, thin, ->, >=stealth] (3,4.7);
\draw[black, thin, ->, >=stealth] (6,4.7);
\draw[black, thin, ->, >=stealth] (1.6,0) -- (1.7,0);
\draw[black, thin, ->, >=stealth] (4.6,0) -- (4.7,0);
\draw[black, thin, ->, >=stealth] (1.6,3) -- (1.7,3);
\draw[black, thin, ->, >=stealth] (4.6,3) -- (4.7,3);
\draw[black, thin, ->, >=stealth] (1.6,6) -- (1.7,6);
\draw[black, thin, ->,>=stealth] (4.6,6) -- (4.7,6);
\draw[black, thin] (3,6) -- (6,6);
\draw[black, thin] (0,6) -- (3,6);
\draw[black, thin] (0,0) -- (0,3);
\draw[black, thin] (0,3) -- (0,6);
\draw[black, thin] (3,0) -- (3,3);
\draw[black, thin] (3,0) -- (6,0);
\draw[black, thin] (3,3) -- (3,6);
\draw[black, thin] (3,3) -- (6,3);
\draw[black, thin] (0,3) -- (3,3);
\draw[black, thin] (6,0) -- (6,3);
\draw[black, thin] (6,3) -- (6,6);
\draw[gray,thin, ->, >=stealth] (2.2,1.5) arc (0:130:0.6);
\draw[gray,thin, ->, >=stealth] (5.2,1.5) arc (0:130:0.6);
\draw[gray,thin, ->, >=stealth] (2.2,4.5) arc (0:130:0.6);
\draw[gray,thin, ->, >=stealth] (5.2,4.5) arc (0:130:0.6);
\filldraw[black] (0,0) circle (0.6pt) node[anchor=north east] {$x_{1,1}$};
\filldraw[black] (3,0) circle (0.6pt) node[anchor=north east] {$x_{2,1}$};
\filldraw[black] (0,6) circle (0.6pt) node[anchor=north east] {$x_{1,1}$};
\filldraw[black] (3,3) circle (0.6pt) node[anchor=north east] {$x_{2,2}$};
\filldraw[black] (1.5,0) circle (0pt) node[anchor=south] {$e^1_{1,1}$};
\filldraw[black] (1.5,6) circle (0pt) node[anchor=south] {$e^1_{1,1}$};
\filldraw[black] (4.5,0) circle (0pt) node[anchor=south] {$e^1_{2,1}$};
\filldraw[black] (4.5,6) circle (0pt) node[anchor=south] {$e^1_{2,1}$};

\filldraw[black] (0,1.5) circle (0pt) node[anchor=west] {$e^2_{1,1}$};
\filldraw[black] (0,4.5) circle (0pt) node[anchor=west] {$e^2_{1,2}$};
\filldraw[black] (3,4.5) circle (0pt) node[anchor=west] {$e^2_{2,2}$};
\filldraw[black] (6,4.5) circle (0pt) node[anchor=west] {$e^2_{1,2}$};
\filldraw[black] (6,1.5) circle (0pt) node[anchor=west] {$e^2_{1,1}$};
\filldraw[black] (3,1.5) circle (0pt) node[anchor=west] {$e^2_{2,1}$};
\filldraw[black] (1.5,3) circle (0pt) node[anchor=south] {$e^1_{1,2}$};
\filldraw[black] (4.5,3) circle (0pt) node[anchor=south] {$e^1_{2,2}$};
\filldraw[black] (0,3) circle (0.6pt) node[anchor=north east] {$x_{1,2}$};
\filldraw[black] (6,0) circle (0.6pt) node[anchor=north west] {$x_{1,1}$};
\filldraw[black] (6,3) circle (0.6pt) node[anchor=north west] {$x_{1,2}$};
\filldraw[black] (6,6) circle (0.6pt) node[anchor=north west] {$x_{1,1}$};
\filldraw[black] (3,6) circle (0.6pt) node[anchor=north east] {$x_{2,1}$};
\filldraw[black] (1.2, 1.5) circle (0pt) node[anchor=west] {$V_{1,1}$};
\filldraw[black] (4.2, 1.5) circle (0pt) node[anchor=west] {$V_{2,1}$};
\filldraw[black] (4.2, 4.5) circle (0pt) node[anchor=west] {$V_{2,2}$};
\filldraw[black] (1.2, 4.5) circle (0pt) node[anchor=west] {$V_{1,2}$};
\end{tikzpicture}
\caption{Combinatorial torus}
\end{center}
\end{figure}
It is clear that by \eqref{4.2} the conditions \eqref{4.1} hold for any $r$-form on the combinatorial torus.
The Hilbert space $H(V)$ considered in the previous sections can be regarded as a space of cochains of the complex $K(T)$ dual to $C(T)$.

Let now consider the forms $\varphi\in K^0(T)$, $\omega\in K^1(T)$ and $\psi\in K^2(T)$, that is
\begin{equation*}\label{}
  \varphi=\sum_{k=1}^2\sum_{s=1}^2\varphi_{k,s}x^{k,s}, \quad
  \omega=\sum_{k=1}^2\sum_{s=1}^2(u_{k,s}e_1^{k,s}+v_{k,s}e_2^{k,s}), \quad \psi=\sum_{k=1}^2\sum_{s=1}^2\psi_{k,s}V^{k,s}.
\end{equation*}
By  \eqref{4.2} for these forms the formulas \eqref{2.6}, \eqref{2.7}, \eqref{2.17} and \eqref{2.18} become
\begin{align}\label{4.3}
 d^c\varphi=(\varphi_{2,1}-\varphi_{1,1})e_1^{1,1}+(\varphi_{1,1}-\varphi_{2,1})e_1^{2,1}+(\varphi_{2,2}-\varphi_{1,2})e_1^{1,2}\nonumber\\+(\varphi_{1,2}-\varphi_{2,2})e_1^{2,2}+
  (\varphi_{1,2}-\varphi_{1,1})e_2^{1,1}+(\varphi_{1,1}-\varphi_{1,2})e_2^{1,2}\nonumber\\+(\varphi_{2,2}-\varphi_{2,1})e_2^{2,1}+(\varphi_{2,1}-\varphi_{2,2})e_2^{2,2},
\end{align}
\begin{align}\label{4.4}
 d^c\omega=(u_{1,1}-u_{1,2}+v_{2,1}-v_{1,1})V^{1,1}+(u_{2,1}-u_{2,2}-v_{2,1}+v_{1,1})V^{2,1}\nonumber\\
  +(u_{1,2}-u_{1,1}+v_{2,2}-v_{1,2})V^{1,2}+(u_{2,2}-u_{2,1}+v_{1,2}-v_{2,2})V^{2,2},
\end{align}
\begin{align}\label{4.5}
 \delta^c\omega=(u_{2,1}-u_{1,1}+v_{1,2}-v_{1,1})x^{1,1}+(u_{1,1}-u_{2,1}-v_{2,1}+v_{2,2})x^{2,1}\nonumber\\
  +(u_{2,2}-u_{1,2}+v_{1,1}-v_{1,2})x^{1,2}+(u_{1,2}-u_{2,2}+v_{2,1}-v_{2,2})x^{2,2},
\end{align}
\begin{align}\label{4.6}
 \delta^c\psi=(\psi_{1,1}-\psi_{1,2})e_1^{1,1}+(\psi_{2,1}-\psi_{2,2})e_1^{2,1}+(\psi_{2,2}-\psi_{1,2})e_2^{1,2}\nonumber\\+(\psi_{2,1}-\psi_{1,1})e_2^{1,1}+
  (\psi_{1,2}-\psi_{1,1})e_1^{1,2}+(\psi_{2,2}-\psi_{2,1})e_1^{2,2}\nonumber\\+(\psi_{1,2}-\psi_{2,2})e_2^{2,2}+(\psi_{1,1}-\psi_{2,1})e_2^{2,1}.
\end{align}
It should be noted that the formulas above can be expressed  in matric form.
Let introduce the following row vectors
\begin{equation*}\label{}
 [\varphi]=[\varphi_{1,1} \ \varphi_{2,1} \ \varphi_{1,2} \ \varphi_{2,2} ], \quad
 [\omega]=[u_{1,1} \ u_{2,1} \ v_{1,2} \ v_{1,1} \ u_{1,2} \ u_{2,2} \ v_{2,2} \ v_{2,1}],
 \end{equation*}
 \begin{equation*}\label{}
 [x]=[x^{1,1} \ x^{2,1} \ x^{1,2} \ x^{2,2} ], \quad
 [e]=[e_1^{1,1} \ e_1^{2,1} \ e_2^{1,2} \ e_2^{1,1} \ e_1^{1,2} \ e_1^{2,2} \ e_2^{2,2} \ e_2^{2,1}],
 \end{equation*}
 \begin{equation*}\label{}
 [\psi]=[\psi_{1,2} \ \psi_{2,2} \ \psi_{1,1} \ \psi_{2,1} ], \qquad
 [V]=[V^{1,2} \ V^{2,2} \ V^{1,1} \ V^{2,1}].
 \end{equation*}
 Denote by $[\cdot]^T$ a corresponding column vector.
 Then we have
 \begin{equation*}\label{}
 d^c\varphi=A[\varphi]^T[e], \quad d^c\omega=B[\omega]^T[V],
  \quad \delta^c\omega=A^T[\omega]^T[x], \quad \delta^c\psi=B^T[\psi]^T[e],
 \end{equation*}
 where
\begin{equation*}
A=
\begin{bmatrix}
-1 & 1 & 0 & 0 \\
1 & -1 & 0 & 0 \\
1 & 0 & -1 & 0 \\
-1 & 0 & 1 & 0 \\
0 & 0 & -1 & 1 \\
0 & 0 & 1 & -1 \\
0 & 1 & 0 & -1 \\
0 & -1 & 0 & 1
\end{bmatrix}, \quad
B=
\begin{bmatrix}
-1 & 0 & -1 & 0 & 1 & 0 & 1 & 0 \\
0 & -1 & 1 & 0 & 0 & 1 & -1 & 0\\
1 & 0 & 0 & -1 & -1 & 0 & 0 & 1\\
0 & 1 & 0 & 1 & 0 & -1 & 0 & -1
\end{bmatrix}
\end{equation*}
and $A^T$, $B^T$ are the transpose of $A$, $B$.

 The discrete Hodge-Dirac operator \eqref{3.4} on the combinatorial torus can be represented by the following block matrix
\begin{equation*}
\begin{bmatrix}
\bigzero & A^T & \bigzero \\
A & \bigzero & B^T \\
\bigzero & B & \bigzero
\end{bmatrix},
\end{equation*}
where $\bigzero$ is a zero square  matrix of the corresponding size.

In the same way the discrete Laplacian on the combinatorial torus can be written as
\begin{equation*}\label{}
 \Delta^c\varphi=D[\varphi]^T[x], \quad \Delta^c\omega=D_1[\omega]^T[e],
   \quad \Delta^c\psi=D[\psi]^T[V],
 \end{equation*}
 where
\begin{equation*}
D=
\begin{bmatrix}
4 & -2 & -2 & 0 \\
-2 & 4 & 0 & -2 \\
-2 & 0 & 4 & -2 \\
0 & -2 & -2 & 4
\end{bmatrix}, \
D_1=
\begin{bmatrix}
4 & -2 & 0 & 0 & -2 & 0 & 0 & 0 \\
-2 & 4 & 0 & 0 & 0 & -2 & 0 & 0\\
0 & 0 & 4 & -2 & 0 & 0 & -2 & 0\\
0 & 0 & -2 & 4 & 0 & 0 & 0 & -2 \\
-2 & 0 & 0 & 0 & 4 & -2 & 0 & 0 \\
0 & -2 & 0 & 0 & -2 & 4 & 0 & 0 \\
0 & 0 & -2 & 0 & 0 & 0 & 4 & -2 \\
0 & 0 & 0 & -2 & 0 & 0 & -2 & 4
\end{bmatrix}.
\end{equation*}

Let us define  analogues of the cohomology groups $\mathcal{H}^r(T)$ of the combinatorial torus $C(T)$.
The quotient space of the linear space of  closed $r$-forms
\begin{equation*}
N_{d^c}^r(T)=\{\omega\in K^r(T): \  d^c\omega=0\}
\end{equation*}
modulo the subspace of exact $r$-forms
\begin{equation*}
R_{d^c}^r(T)=\{\omega\in K^r(T): \ \exists\varphi\in K^{r-1}(T) \quad \omega=d^c\varphi\}
\end{equation*}
is called  the $r$-th cohomology group of $C(T)$, that is,
\begin{equation*}
\mathcal{H}^r(T)=N_{d^c}^r(T)/R_{d^c}^r(T).
\end{equation*}
Two closed $r$-forms  $\omega^1$ and $\omega^2$ are cohomologous, $\omega^1\sim\omega^2$, if and only if they differ by an exact form, i.e.,
\begin{equation*}
\omega^1\sim\omega^2 \Leftrightarrow \omega^1-\omega^2\in R_{d^c}^r(T).
\end{equation*}
An element of $\mathcal{H}^r(T)$ is thus an equivalence class $[\omega]$ of closed $r$-forms $\omega+d^c\varphi$, defined by the equivalence relation $\sim$. These equivalence classes endow $\mathcal{H}^r(T)$ with a group structure.

Calculation of $\mathcal{H}^0(T)$. Since there are no $-1$-forms, a $0$-form $\varphi$ can never be exact, i.e. $R_{d^c}^0(T)=\{0\}$.
If $\varphi\in N_{d^c}^0(T)$ then $d^c\varphi=0$. By \eqref{4.3} it follows immediately that  $\varphi_{1,1}=\varphi_{2,1}=\varphi_{1,2}=\varphi_{2,2}$. Hence
\begin{equation*}
\varphi=c(x^{1,1}+x^{2,1}+x^{1,2}+x^{2,2}),
\end{equation*}
 where $c\in\mathbb{R}$ is a constant. Thus $\mathcal{H}^0(T)$ is isomorphic to the group generated by  one independent generator which is isomorphic to $\mathbb{R}$  and we write
$\mathcal{H}^0(T)\cong\mathbb{R}$.

Calculation of $\mathcal{H}^1(T)$. Let $\omega^1=\{u^1, v^1\}\in R_{d^c}^1(T)$. Then $\omega^1=d^c \varphi$ for some  $\varphi\in K^0(T)$. From \eqref{4.3} we have
\begin{align*}\label{}
 u^1_{1,1}=\varphi_{2,1}-\varphi_{1,1}, \qquad u^1_{1,2}=\varphi_{2,2}-\varphi_{1,2}, \\
  u^1_{2,1}=\varphi_{1,1}-\varphi_{2,1}, \qquad u^1_{2,2}=\varphi_{1,2}-\varphi_{2,2}, \\
  v^1_{1,1}=\varphi_{1,2}-\varphi_{1,1}, \qquad v^1_{2,1}=\varphi_{2,2}-\varphi_{2,1}, \\ v^1_{1,2}=\varphi_{1,1}-\varphi_{1,2},\qquad v^1_{2,2}=\varphi_{2,1}-\varphi_{2,2}.
\end{align*}
It follows that
\begin{equation*}
u^1_{2,1}=-u^1_{1,1}, \qquad u^1_{1,2}=-u^1_{2,2}, \qquad v^1_{1,2}=-v^1_{1,1}, \qquad v^1_{2,1}=-v^1_{2,2}.
\end{equation*}
Hence any form $\omega^1\in R_{d^c}^1(T)$ can be written as
\begin{align*}\label{}
\omega^1=u^1_{1,1}e_1^{1,1}-u^1_{1,1}e_1^{2,1}-u^1_{2,2}e_1^{1,2}+u^1_{2,2}e_1^{2,2}\\+
  v^1_{1,1}e_2^{1,1}-v^1_{1,1}e_2^{1,2}-v^1_{2,2}e_2^{2,1}+v^1_{2,2}e_2^{2,2}\\
  =u^1_{1,1}(e_1^{1,1}-e_1^{2,1})+u^1_{2,2}(e_1^{2,2}-e_1^{1,2})+
  v^1_{1,1}(e_2^{1,1}-e_2^{1,2})+v^1_{2,2}(e_2^{2,2}-e_2^{2,1}).
\end{align*}
Let now $\omega=\{u, v\}$ be a closed 1-form, i. e., $\omega\in N_{d^c}^1(T)$ and $d^c \omega=0$.
By \eqref{4.4} the requirement $d^c \omega=0$ means  that
\begin{align*}\label{}
u_{1,1}-u_{1,2}+v_{2,1}-v_{1,1}=0,\\
u_{2,1}-u_{2,2}-v_{2,1}+v_{1,1}=0, \\
  u_{1,2}-u_{1,1}+v_{2,2}-v_{1,2}=0,\\
  u_{2,2}-u_{2,1}+v_{1,2}-v_{2,2}=0.
\end{align*}
From this we obtain
\begin{align*}\label{}
u_{1,1}-u_{1,2}+u_{2,1}-u_{2,2}=0,\\
v_{1,1}-v_{2,1}+v_{1,2}-v_{2,2}=0.
\end{align*}
Hence  any form $\omega\in N_{d^c}^1(T)$ can be written as
\begin{align*}\label{}
\omega=u_{1,1}e_1^{1,1}+(u_{1,2}-u_{1,1}+u_{2,2})e_1^{2,1}+u_{1,2}e_1^{1,2}+u_{2,2}e_1^{2,2} \\+
  v_{1,1}e_2^{1,1}+(v_{1,1}+v_{1,2}-v_{2,2})e_2^{2,1}+v_{1,2}e_2^{1,2}+v_{2,2}e_2^{2,2}.
 \end{align*}
 This yields
 \begin{align*}\label{}
\omega=u_{1,1}(e_1^{1,1}-e_1^{2,1})+u_{1,2}(e_1^{2,1}+e_1^{1,2})+u_{2,2}(e_1^{2,2}+e_1^{2,1}) \\+
  v_{1,1}(e_2^{1,1}+e_2^{2,1})+v_{1,2}(e_2^{2,1}+e_2^{1,2})+v_{2,2}(e_2^{2,2}-e_2^{2,1})\\
  =(u_{1,2}+u_{2,2})(e_1^{2,1}+e_1^{1,2})+(v_{1,2}+v_{1,1})(e_2^{2,1}+e_2^{1,2})+\omega^0,
  \end{align*}
  where
 \begin{equation*}
 \omega^0= u_{1,1}(e_1^{1,1}-e_1^{2,1})+u_{2,2}(e_1^{2,2}-e_1^{1,2})+
  v_{1,1}(e_2^{1,1}-e_2^{1,2})+v_{2,2}(e_2^{2,2}-e_2^{2,1})
  \end{equation*}
  and note that $\omega^0\in R_{d^c}^1(T)$. Hence
  \begin{equation*}
\omega\sim (u_{1,2}+u_{2,2})(e_1^{2,1}+e_1^{1,2})+(v_{1,2}+v_{1,1})(e_2^{2,1}+e_2^{1,2})\in\mathcal{H}^1(T).
  \end{equation*}
 This means that $\mathcal{H}^1(T)$ has two independent generators. Thus $\mathcal{H}^1(T)\cong\mathbb{R}^2$.

 Calculation of $\mathcal{H}^2(T)$. A 2-form
 \begin{equation*}
 \psi^1=\psi^1_{1,1}V^{1,1}+\psi^1_{2,1}V^{2,1}+\psi^1_{1,2}V^{1,2}+\psi^1_{2,2}V^{2,2}
 \end{equation*}
is an element of $R_{d^c}^2(T)$ if $\psi^1=d^c\omega$ for some 1-form $\omega=\{u,v\}$. Using \eqref{4.4} $\psi^1=d^c\omega$ gives rise to
\begin{align*}\label{}
\psi^1_{1,1}=u_{1,1}-u_{1,2}+v_{2,1}-v_{1,1},  \quad \psi^1_{2,1}=u_{2,1}-u_{2,2}-v_{2,1}+v_{1,1}, \\
 \psi^1_{1,2}=u_{1,2}-u_{1,1}+v_{2,2}-v_{1,2}, \quad \psi^1_{2,2}=u_{2,2}-u_{2,1}+v_{1,2}-v_{2,2}.
\end{align*}
Adding these equations we obtain
\begin{equation*}
 \psi^1_{1,1}+\psi^1_{2,1}+\psi^1_{1,2}+\psi^1_{2,2}=0.
 \end{equation*}
 Hence any element $\psi^1\in R_{d^c}^2(T)$ can be written as
 \begin{equation*}
 \psi^1=\psi^1_{1,1}V^{1,1}+\psi^1_{2,1}V^{2,1}+\psi^1_{1,2}V^{1,2}+(-\psi^1_{1,1}-\psi^1_{2,1}-\psi^1_{1,2})V^{2,2}.
 \end{equation*}
 Since $d^c\psi=0$ for any 2-form $\psi$, $N_{d^c}^2(T)=K^2(T)$. Any element of $N_{d^c}^2(T)$ can be expressed as
 \begin{equation*}
 \psi=(\psi_{1,1}+\psi_{2,1}+\psi_{1,2}+\psi_{2,2})V^{2,2}+\psi^0,
 \end{equation*}
 where
 \begin{equation*}
 \psi^0=\psi_{1,1}V^{1,1}+\psi_{2,1}V^{2,1}+\psi_{1,2}V^{1,2}+(-\psi_{1,1}-\psi_{2,1}-\psi_{1,2})V^{2,2}.
 \end{equation*}
 Thus
 \begin{equation*}
 \psi\sim(\psi_{1,1}+\psi_{2,1}+\psi_{1,2}+\psi_{2,2})V^{2,2},
 \end{equation*}
 since $\psi^0\in R_{d^c}^2(T)$.  This means that $\mathcal{H}^2(T)$ has only one independent generator. So that $\mathcal{H}^2(T)\cong\mathbb{R}$.

 Thus the cohomology groups are exactly the same as in the continuum case.


\begin{thebibliography}{30}

 \bibitem[1]{Arnold1}
D.N.~Arnold, R.S.~Falk, and R.~Winther. Finite element exterior calculus, homological techniques, and applications,
{\em Acta Numer.} \textbf{15} (2006), 1-155.

\bibitem[2]{Arnold2}
D.N.~Arnold, R.S.~Falk, and R.~Winther. Finite element exterior calculus: from Hodge theory to numerical stability,
{\em Bull. Amer. Math. Soc.} \textbf{47}(2) (2010), 281-354.


\bibitem[3]{Ayoub} R.~Ayoub, A.~Hamdouni, and D.~Razafindralandy. A new Hodge operator in discrete exterior calculus. Application to fluid mechanics. {\em Communications on Pure \& Applied Analysis}  \textbf{20}(6) (2021), 2155--2185.



\bibitem[4]{Beauce}
V. de Beauc\'{e}, S. Sen and J.C. Sexton.   Chiral dirac fermions on the lattice using geometric discretisation,
{\em Nucl. Phys. B (Proc. Suppl.)} \textbf{129--130} (2004), 468--470.

\bibitem[5]{Becher}
P. Becher and H. Joos. The Dirac-K\"{a}hler equation and fermions on the lattice,  {\em Z. Phys. C}  \textbf{15}   (1982),
 343--365.

\bibitem[6]{CKSV} P.~Cerejeiras, U.~Kähler, F.~Sommen~, and A.~Vajiac. Script Geometry. In: Bernstein S., Kähler U., Sabadini I., Sommen F. (eds) Modern Trends in Hypercomplex Analysis. Trends in Mathematics. Birkhäuser, Cham. (2016), 79--110.


\bibitem[7]{Dezin}
A.A.~Dezin.  Multidimensional analysis and discrete models,
CRC Press, Boca Raton, 1995.

\bibitem[8]{Dodziuk}
 J.~Dodziuk.  Finite-difference approach to Hodge theory
of harmonic forms,  {\em Amer. J. Math.} \textbf{98} (1976), 79--104.









\bibitem[9]{Rabin}
J.M. Rabin. Homology theory of lattice fermion doubling, {\em Nucl. Phys. B} \textbf{201}
(2) (1982), 315--332.

\bibitem[10]{Nitschke}
I. Nitschke, S.Reuther, and A. Voigt.  Discrete Exterior Calculus (DEC) for the Surface Navier-Stokes Equation. In: Bothe D., Reusken A. (eds) Transport Processes at Fluidic Interfaces. Advances in Mathematical Fluid Mechanics. Birkhäuser, Cham. (2017), 177--197.

\bibitem[11]{SSSA}
 S. Sen, S. Sen, J.C.~Sexton, and D.~Adams.
 A geometric discretisation scheme applied to the Abelian
Chern-Simons theory, {\em Phys. Rev. E} \textbf{61} (2000),  3174--3185.

 \bibitem[12]{Stern}
A.~Stern, P.~Leopardi. The abstract Hodge-Dirac operator and its stable discretization, {\em SIAM Journal on
Numerical Analysis} \textbf{54}(6) (2016),  3258-–3279.

 \bibitem[13]{S1}
 V.~Sushch. Difference Poisson equation on a curvilinear mesh.
{\em  Differ. Equations} \textbf{32}(5) (1996), 693--697; translation from {\em Differ. Uravn.} \textbf{32}(5) (1996), 684--688.

 \bibitem[14]{S2}
 V.~Sushch.  On some finite-difference analogs of invariant first-order hyperbolic systems, {\em Diff. Equ.} \textbf{35}(3) (1999),  414--420; translation from
 {\em Differ. Uravn.} \textbf{35}(3)  (1999),  411--417.

\bibitem[15]{S3} V. Sushch. Green function for a two-dimensional discrete Laplace-Beltrami operator.
{\em Cubo} \textbf{10}(2) (2008), 47-59.

\bibitem[16]{S4} V. Sushch.  A discrete model of the Dirac-K\"{a}hler equation. \textit{Rep. Math. Phys.} \textbf{73}(1) (2014),  109--125.
\bibitem[17]{S5}  V.~Sushch.  A discrete Dirac-K\"{a}hler equation using a geometric discretisation scheme. {\em Adv. Appl. Clifford Alg.}  \textbf{28:72}, 1--17 (2018)

 \bibitem[18]{S6}  V.~Sushch. A discrete version of plane wave solutions of the Dirac equation in the Joyce form. {\em Adv. Appl. Clifford Alg.} 30, No. 3, Paper No. 46, 20 p. (2020).






 \bibitem[19]{Teixeira}
 F.L. Teixeira,  W.C. Chew.  Lattice electromagnetic theory from a topological viewpoint, {\em J.  Math. Phys.} \textbf{40}(1)  (1999),  168--187.
\bibitem[20]{Watterson} S. Watterson. The
chiral and flavour projection of Dirac-K\"{a}hler fermions in the geometric
discretization, {\em Int. J. Geom. Methods Mod. Phys.} \textbf{5}(3)  (2008), 345-362.

\bibitem[21]{Warner} F. W. Warner.
Foundations of Differentiable Manifolds and Lie Groups, Springer-Verlag, New York,  1983.


\bibitem[22]{Wilson}
S.O. Wilson.  Differential forms, fluids, and finite models, {\em Proc. of Amer. Math. Soc.} \textbf{139} (2011),  2597--2604.


\end{thebibliography}
 \end{document}